\newif\ifsingleColumn
\newif\ifArxiv
\newcommand\munderbar[1]{%
  \underaccent{\bar}{#1}}
\newtheorem{theorem}{Theorem}[section]
\newtheorem{lemma}[theorem]{Lemma}
\newtheorem{corollary}[theorem]{Corollary}
\newenvironment{remark}[1][Remark]{\begin{trivlist}
\item[\hskip \labelsep {\bfseries #1}]}{\end{trivlist}}
\renewcommand{\qed}{\nobreak \ifvmode \relax \else
      \ifdim\lastskip<1.5em \hskip-\lastskip
      \hskip1.5em plus0em minus0.5em \fi \nobreak
      \vrule height0.75em width0.5em depth0.25em\fi}
\tikzset{every picture/.style={font issue=\footnotesize},
         font issue/.style={execute at begin picture={#1\selectfont}}
        }
\newcommand{\TT}{\ensuremath{\mathsf{\tiny{T}}}}
\newcommand{\T}{^{\TT}}
\newcommand{\diag}{\mathrm{diag}}
\newcommand{\tr}[1]{\mathrm{tr}(#1)}
\newcommand{\comp}[1]{{#1}^{\mathsf{c}}}
\newcommand{\HChange}[1]{#1}
\newcommand{\HChangeN}[1]{#1}
\newcommand{\VecT}[2][]{\left( \begin{array}{c} #1 \\ #2 \end{array} \right)}
\newcommand{\mytitle}{\vspace*{-5pt}\textbf{Accepted final version.} This is an extended version of an article to appear in the IEEE Transactions on Automatic Control (additional parts in the Appendix).\\ \copyright 2017 IEEE. Personal use of this material is permitted. Permission from IEEE must be obtained for all other uses, in any current or future media, including reprinting/ republishing this material for advertising or promotional purposes, creating new collective works, for resale or redistribution to servers or lists, or reuse of any copyrighted component of this work in other works.}
\begin{document}
%
\title{Distributed Event-Based State Estimation for \\Networked Systems: An LMI-Approach}
%
%
%

\author{Michael~Muehlebach,~\IEEEmembership{Student-Member,~IEEE,}
        Sebastian~Trimpe,~\IEEEmembership{Member,~IEEE}
\thanks{M. Muehlebach is with the Institute for Dynamic Systems and Control, ETH Zurich, 8092 Zurich, Switzerland, e-mail: michaemu@ethz.ch.}
\thanks{S. Trimpe is with the Autonomous Motion Department at the Max Planck Institute for Intelligent Systems, 72076 T\"ubingen, Germany, e-mail: strimpe@tuebingen.mpg.de.}
\thanks{This work was supported in part by the Swiss National Science Foundation, the Max Planck Society, and the Max Planck ETH Center for Learning Systems.}%
}


%
%

\markboth{IEEE Transactions On Automatic Control,~Vol.~X, No.~Y, September~Z}%
{Muehlebach \MakeLowercase{\textit{et al.}}: Distributed Event-Based State Estimation of Networked Systems: An LMI-Approach}
%



\maketitle

\ifArxiv
\thispagestyle{fancy}	
\pagestyle{empty}
\fi

\begin{abstract}
In this work, a dynamic system is controlled by multiple sensor-actuator agents, each of them commanding and observing parts of the system's input and output. The different agents sporadically exchange data with each other via a common bus network according to local event-triggering protocols. 
From these data, each agent estimates the complete dynamic state of the system and uses its estimate for feedback control.
We propose a synthesis procedure for designing the agents' state estimators and the event triggering thresholds.
The resulting distributed and event-based control system is guaranteed to be stable and to satisfy a predefined estimation performance criterion. 
\HChange{The approach is applied to the control of a vehicle platoon, where the method's trade-off between performance and communication, and the scalability in the number of agents is demonstrated.}
\end{abstract}


%
\IEEEpeerreviewmaketitle

\vspace*{-5pt}\section{Introduction}\label{Sec:Intro}
The majority of today's control systems are implemented on digital hardware with a periodic exchange of data between the various system's components, e.g.\ reading sensor values, providing actuation commands, etc. While periodic information exchange simplifies the analysis of the resulting control systems, it is fundamentally limited: system resources such as computation and communication are used at predetermined time instants irrespective of the current state of the system, or the information content of the data to be passed between the components. This is not the case with event-based strategies, where information is exchanged or processed only when certain events indicate that an update would be favorable, for instance, to improve the control or estimation performance. System resources are therefore only used when necessary. As a consequence, event-based communication for control, estimation, and optimization is an active and growing area of research, see e.g.\HChange{\cite{Le11,HeJoTa12,GrHiJuEtAl14,Ca14,MiskowiczEvent,EventbasedBroadcast}} 
and references therein.

In this work, we consider event-based communication for a distributed control system, where multiple sensor and actuator agents observe and control a dynamic system and exchange data via a common bus network, as shown in Fig.~\ref{Fig:NCS}. In previous work \cite{Tr12,Sebastian-CDC}, an architecture for distributed state estimation with event-based communication between the agents was proposed. Each agent consists of three main components: the controller computes actuation commands based on the information obtained from the state estimator; the event generator (EG) decides whether local measurements are transmitted over the common bus network and shared with all agents; and the state estimator reconstructs the system's state based on the measurements communicated over the bus network.
The event generator compares the current measurement to the prediction of the measurement by the state estimator for making effective transmit decisions. 
\HChange{The architecture is distributed due to the fact that transmit decisions, state estimates, and control inputs are computed locally.}
The common bus is a key element of the proposed architecture as it facilitates information sharing between all components.
Bus systems as assumed herein are common in industry automation \cite{Th05}, and have recently also been proposed for multi-hop low-power wireless networks \cite{FeZiMoTh12}.

\begin{figure}[tb]
\centering
\graphicspath{{media/}}
\includegraphics[scale=.8]{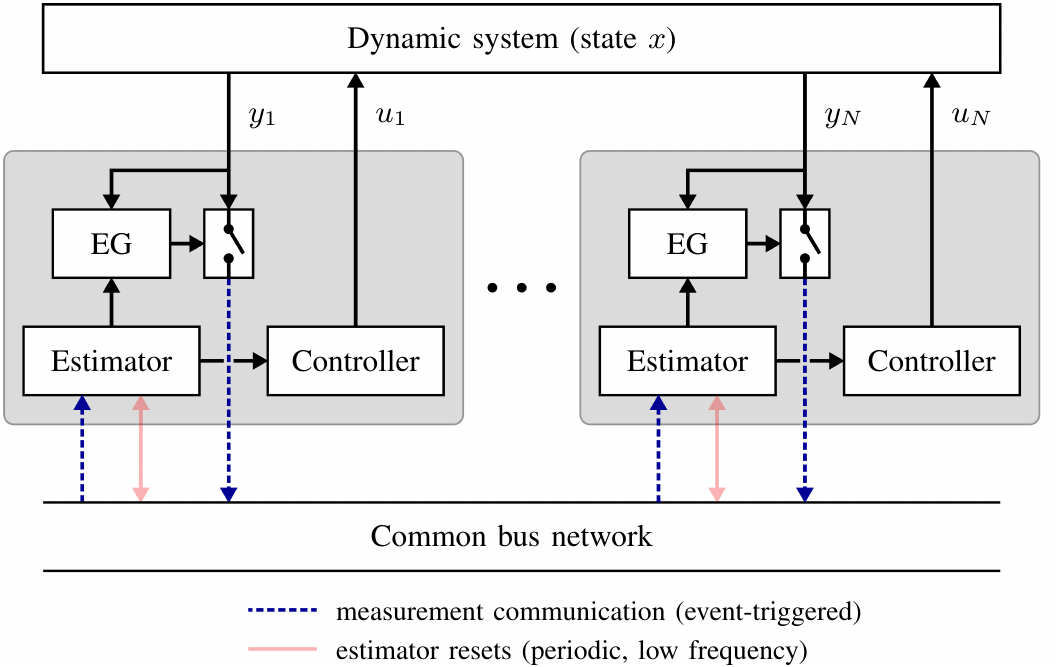}
\caption{Networked control system 
considered in this paper.
Each agent observes part of the system state $x$ through local sensors $y_i$ and sends commands $u_i$ to its local actuator. 
Event-triggered communication is indicated by dashed arrows, while periodic communication is shown by solid ones. 
The periodic estimator resets can be avoided under certain conditions (to be made precise later).  The common-bus architecture is motivated by commonly used field-bus systems \cite{Th05}, such as CAN on the Balancing Cube \cite{trimpe2012balancing}, as well as recent wireless systems \cite{FeZiMoTh12}. 
}
\label{Fig:NCS}
\end{figure}

The approach in \cite{Tr12,Sebastian-CDC} has been shown to be effective for reducing measurement communication in experiments on the Balancing Cube test bed \cite{trimpe2012balancing}, which has a network architecture as in Fig.~\ref{Fig:NCS}.
The method in \cite{Tr12,Sebastian-CDC} relies on a distributed and event-based implementation that emulates a given centralized observer and controller design. In \cite{Tr12}, closed-loop stability is shown in an ideal scenario with perfect communication (no delay or packet loss) and identically initialized state estimates.
To guarantee closed-loop stability also for the case where state estimates may differ (e.g.\ due to packet losses), additional periodic estimator resets are introduced in \cite{Sebastian-CDC}. Both approaches require periodic communication of the inputs.

\HChange{In this work, a modified design is proposed, which further reduces network load by avoiding the communication of the control inputs altogether and under favorable circumstances (to be made precise) also the periodic estimator resets.}
In contrast to \cite{Tr12,Sebastian-CDC}, which obtain the estimator gains from a centralized Luenberger observer design and the event triggering thresholds by manual tuning, we synthesize observer gains \emph{and} triggering thresholds specifically for the distributed and event-based estimation problem. A flexible performance objective is derived, such that the state estimator design can be formulated as an optimization problem. 
The optimization is augmented with linear matrix inequalities (LMIs) imposing closed-loop stability. As a result, both, the state estimator and the event generator, are designed by solving convex optimization problems, \cite{BoydLMI}. 



Preliminary results of those herein were presented in the conference papers \cite{muehlebach2015lmi} and \cite{muehlebach2015guaranteed}, which focused on stability and performance, respectively. \HChange{The main extensions of this article include a less conservative stability condition for the inter-agent error; a relaxation of the LMI-design that scales linearly instead of exponentially in the number of agents; 
new simulation examples; and the unified presentation of previous stability and performance results.}


\emph{Related Work:}
%
%
Distributed event-based state estimation designs based on LMI formulations are also proposed in \cite{zhang2015event,Zhang2015,YaZhZhYa14,MiTiFiDiJoRu12}, whose relation to this work is discussed next.  
For a general overview and references on event-based state estimation, the reader is referred to the reviews in \cite{MiskowiczEvent,ShShCh16,SebastianArxiv}.
 
While herein filtering performance is considered in terms of an $\mathcal{H}_2$ index (e.g. like in the steady-state Kalman filter), \cite{zhang2015event} considers $\mathcal{H}_\infty$ performance and proposes an LMI-based sufficiency condition for filter design. Similarly, \cite{Zhang2015} proposes a synthesis procedure guaranteeing closed-loop stability and dissipativity for a type of event-based output feedback systems.
In \cite{YaZhZhYa14}, the problem of distributed state estimation in a sensor network described by a directed graph with communication only between neighbors is considered. 
As in \cite{zhang2015event} and \cite{Zhang2015}, the transmit decision is based on the difference between the actual measurement and the last measurement, which was transmitted. In contrast,
the transmit decision presented herein uses model-based predictions of the output and compares it with the actual measurement, which typically yields more effective triggering decisions (see \cite{TrimpeCampi,SiKeNo14}).

In \cite{MiTiFiDiJoRu12}, local observers combining a Luenberger observer and consensus-like correction are proposed. An LMI-based design is used to synthesize the observer gains according to the periodic-update (full communication) scenario, and, only in a second step, the event-based mechanism is introduced. 
While a similar Luenberger-type observer structure is used herein, 
the closed-loop stability conditions are not based on the periodic communication scenario, but respect the event-based nature of the control system. 

Most of the mentioned references treat the state estimation problem only, while we simultaneously address stability and performance of the state estimation, and stability of the distributed event-based control system that results when local estimates are used for feedback control. \HChange{The developed results generalize to the pure estimation problem; it suffices to set the state feedback gain $F$ (to be made precise below) to zero.}


\emph{Outline:} The distributed event-based estimation and control architecture is presented in Sec.~\ref{Sec:Arch}, and the problem formulation is made precise in Sec.~\ref{Sec:ProbForm}. The closed-loop dynamics are derived in Sec.~\ref{Sec:CLD} and are then used to obtain conditions guaranteeing closed-loop stability in Sec.~\ref{Sec:SA}. The proposed synthesis procedure is introduced in Sec.~\ref{Sec:Perf} and illustrated in simulation examples in Sec.~\ref{Sec:SimExamples}. The article concludes with remarks in Sec.~\ref{Sec:Conclusion}.
\vspace*{-5pt}\section{Architecture}\label{Sec:Arch}
The following section introduces the distributed event-based control system, which is analyzed subsequently. The architecture is similar to \cite{Tr12} and \cite{Sebastian-CDC}.

\vspace*{-5pt}\subsection{Networked Control System}
The following discrete-time linear system is considered
\begin{align}
\begin{split}
x(k)&=A x(k-1) + B u(k-1) + v(k-1)\\
y(k)&=C x(k)+ w(k),
\end{split}\label{eq:LinSys}
\end{align}
where $k$ denotes the time index, $x(k)\in \mathbb{R}^n$ the state at time $k$, $u(k)\in \mathbb{R}^{n_u}$ the input at time $k$, and $y(k) \in \mathbb{R}^p$ the output at time $k$. The disturbances $v$ and $w$ are bounded \HChange{(but not necessarily deterministic)}, $(A,B)$ is assumed to be stabilizable, and $(A,C)$ is assumed to be detectable.

The inputs and outputs of the system are measured by independent sensor-actuator agents. Therefore the input $u$ and output $y$ is split up according to 
\vspace{-6pt}
\begin{align}
&B \, u(k-1)
= \begin{bmatrix}
B_1 \! & \! B_2 \! & \! \dots \! & \! B_N 
\end{bmatrix}
\begin{bmatrix}
u_1(k-1)\\[-2mm]
\footnotesize\vdots\normalsize \\
u_N(k-1)
\end{bmatrix}
\displaybreak[0] \\
&y(k) =
\begin{bmatrix}
y_1(k) \\[-2mm]
\footnotesize\vdots\normalsize \\
y_N(k)
\end{bmatrix}
= 
\begin{bmatrix}
C_1 \\[-2mm]
\footnotesize\vdots\normalsize \\
C_N
\end{bmatrix}
x(k) +
\begin{bmatrix}
w_1(k) \\[-2mm]
\footnotesize\vdots\normalsize \\
w_N(k)
\end{bmatrix},
\label{eq:system_ys}
\end{align}
where $u_i(k) \in \mathbb{R}^{q_i}$ is agent $i$'s input and $y_i(k) \in \mathbb{R}^{p_i}$ its measurement. The agents can be heterogeneous, thus the dimensions $q_i$ and $p_i$ may differ, including the cases $q_i=0$ and $p_i=0$. It is \emph{not} assumed that the system is detectable or stabilizable by a single agent, i.e. $(A,B_i)$ is not necessarily stabilizable and $(A,C_i)$ not necessarily detectable.

The agents can exchange sensor data $y_i(k)$ with each other
over a broadcast network; that is, if one agent communicates,
all other agents will receive the data.  The communication is assumed to be instantaneous and the agents are synchronized in time. The event-based
mechanism determining when sensor data is exchanged will
be made precise in the next subsection.  
It is assumed that the network bandwidth is sufficient to support such communication, and contention among the agents is resolved by low-level protocols. 
In the Controller Area Network (CAN) on the Balancing Cube \cite{trimpe2012balancing}, for example, contention is resolved through fixed priorities, and the network bandwidth is sufficient to support communication of several agents in one time step.
In contrast to \cite{Tr12,Sebastian-CDC} the agents do not share input data $u_i(k)$ among each other.

We assume that a static state-feedback controller 
$u(k)=F x(k)$ 
is given, rendering $A+BF$ asymptotically stable (all eigenvalues lie strictly within the unit circle). \HChange{The existence of such a feedback gain is guaranteed since $(A,B)$ is stabilizable.} The controller can be designed using standard methods, see e.g. \cite{AsWi97}.

\vspace{-2pt}
\subsection{Distributed Event-Based State Estimation}\label{Sec:EBProtocol}
\vspace{-2pt}
Each agent implements an \emph{event generator} that makes the transmit decision for the local measurement, and a \emph{state estimator} that computes a local state estimate.

\subsubsection{Event Generator}
The event generator triggers the communication of a local measurement $y_i(k)$ of agent $i$ to all other agents. The transmit decision is made according to
\begin{equation}
\text{transmit $y_i(k)$} 
\Leftrightarrow 
|\Delta_i^{-1} \left(y_i(k) - C_i \hat{x}_i(k|k-1)\right)| \geq 1,
\label{eq:eventTrigger}
\end{equation}
where $\Delta_i \in \mathbb{R}^{p_i \times p_i}$ is symmetric and positive definite, $\hat{x}_i(k|k-1)$ is agent $i$'s prediction of the state $x(k)$ based on measurements until time $k-1$ (to be made precise below), 
$C_i \hat{x}_i(k|k-1)$ is agent $i$'s prediction of its measurement $y_i(k)$, and the Euclidean norm is denoted by $|\cdot |$. The communication thresholds $\Delta_i$ will enter the design process as decision variables. 

The underlying idea of the trigger \eqref{eq:eventTrigger} is that a communication should happen whenever the predicted output does not match the actual measurement $y_i(k)$. Such triggers have been considered under the terms \emph{measurement-based trigger}, \emph{innovation-based trigger}, or \emph{predictive sampling} in \cite{TrDAn11,WuJiJoSh13,SiKeNo14,TrimpeCampi}, for example.

To simplify notation, the index set of all agents transmitting their measurements at time $k$ is denoted by 
\ifsingleColumn
\begin{equation}
I(k) := \left\{ i \in \mathbb{N} \; | \; 1 \leq i \leq N, 
|\Delta_i^{-1} \left(y_i(k) - C_i \hat{x}_i(k|k-1)\right)| \geq 1 \right\},
\label{eq:I}
\end{equation}
\else
\begin{align}
\begin{split}
I(k) := &\left\{ i \in \mathbb{N} \; | \; 1 \leq i \leq N, \right. \\
&\hspace{1cm}\left.  |\Delta_i^{-1} \left(y_i(k) - C_i \hat{x}_i(k|k-1)\right)| \geq 1 \right\},
\end{split}
\label{eq:I}
\end{align}
\fi
where $\mathbb{N}$ denotes the set of natural numbers.

\subsubsection{State Estimator}
Each agent estimates the full state $x$. Let $\hat{x}_i(k)=\hat{x}_i(k|k)$ denote agent $i$'s estimate of the state at time $k$ given measurement data up to time $k$, which is computed by
\ifsingleColumn
\begin{align}
\hat{x}_i(k|k-1) &= A  \hat{x}_i(k-1|k-1) + B \hat{u}^i(k-1) \label{eq:EBSE1}  \\
\hat{x}_i(k) &= \hat{x}_i(k|k-1)
+ \sum_{j \in I(k)} L_j \big( y_j(k) - C_j \hat{x}_i(k|k-1) \big) +d_i(k),   \label{eq:EBSE2}
\end{align}
\else
\begin{align}
\hat{x}_i(k|k-1) &= A  \hat{x}_i(k-1|k-1) + B \hat{u}^i(k-1) \label{eq:EBSE1}  \\
\hat{x}_i(k) &= \hat{x}_i(k|k-1) \label{eq:EBSE2} \\
&\phantom{=}+ \sum_{j \in I(k)} L_j \big( y_j(k) - C_j \hat{x}_i(k|k-1) \big) +d_i(k),  \nonumber
\end{align}
\fi
where $\hat{u}^i(k)$ is agent $i$'s belief of the input $u(k)$, $L_j$ are observer gains to be designed, and $d_i$ represents a disturbance, which is assumed to be bounded. 
The disturbance $d_i$ models\footnote{\HChange{We emphasize that $d_i$ is introduced as a generic disturbance signal for the purpose of stability analysis. When implementing the event-based estimator \eqref{eq:EBSE1}, \eqref{eq:EBSE2}, $d_i(k)$ is omitted.}} mismatches between the estimates of the individual agents, which may stem from unequal initialization, different computation accuracy, or imperfect communication. \HChange{For example, if the communication from agent $m$ to agent $i$ fails at time $k$, the disturbance $d_i(k)$ takes the value}
\begin{equation}
d_i(k)= - L_m (y_m(k)-C_m \hat{x}_i(k|k-1)). \label{eq:di1}
\end{equation}
\HChange{In Sec.~\ref{Sec:SimExamples}, random packet drops are simulated in this way. While $d_i$ cannot be bounded for random drops in general, the simulation results demonstrate that the design is effective also in this case. In 
\ifArxiv
App.~\ref{App:ModPacketDrops},
\else
\cite[App.~D]{extendedVersion},
\fi
we discuss a packet drop model where the assumption of bounded disturbances is valid provided that packet drops are sufficiently rare.}

The disturbance signal $d_i$ in \eqref{eq:EBSE2}, which may cause the agents' estimates to differ, plays a crucial role with regards to stability. 
While closed-loop stability is shown in \cite{Tr12} for $d_i = 0$, it was found in \cite{Sebastian-CDC}
that stability can be lost in case $d_i \neq 0$. To recover
stability even in case of nonzero disturbances $d_i$, periodic
estimator resets were introduced in \cite{Sebastian-CDC}. 
By incorporating the event-based and distributed nature of the control system in the observer design herein, the communication of inputs and (under favorable circumstances) the periodic estimator resets are avoided, while still guaranteeing closed-loop stability for $d_i\neq 0$.

The communication protocol \eqref{eq:eventTrigger} implies that a measurement is either transmitted to all agents (and thus included in all state estimates \eqref{eq:EBSE2}), or it is discarded. \HChange{In
\ifArxiv
App.~\ref{Subsec:LocMeasUp},
\else
\cite[App.~C]{extendedVersion},
\fi
the case where each agent updates its state estimate with its local measurements $y_i$ at every time step is discussed. It is shown that stability is still preserved, while at the same time the estimation performance might be improved.}


\subsubsection{Distributed Control}
Given agent $i$'s state estimate, its local input $u_i$ is obtained by
\begin{equation}
u_i(k)=F_i \hat{x}_i(k), \label{eq:TrueFB}
\end{equation}
where $F\T=[F_1\T, F_2\T,\dots,F_N\T]$ is the decomposition of the feedback gain $F$ according to the dimensions of $u_1(k), u_2(k), \dots, u_N(k)$. Agent $i$'s belief $\hat{u}^i(k)$ of the complete input $u(k)$ is defined as
\begin{equation}
\hat{u}^i(k):=F \hat{x}_i(k), \label{eq:EstFB}
\end{equation}
and is used in the state estimator update \eqref{eq:EBSE1}. This contrasts earlier work, \cite{Tr12,Sebastian-CDC}, where it was assumed that each agent has access to the true input $u(k)$. Hence, we do not require the communication of the inputs $u_i(k)$ in this work, which reduces the communication load.

\vspace*{-3pt}\section{Problem Formulation}\label{Sec:ProbForm}
The objective of this article is to present a synthesis procedure for both the estimator gains $L_i$ and the communication thresholds $\Delta_i$. \HChange{The estimators are designed to guarantee i) closed-loop stability (stable dynamics \eqref{eq:LinSys}, \eqref{eq:EBSE1}, \eqref{eq:EBSE2}, \eqref{eq:TrueFB}, and \eqref{eq:EstFB} for bounded disturbances $v$, $w_i$, and $d_i$), and ii) achieve a predefined $\mathcal{H}_2$ performance incorporating estimation and communication objectives.}

\vspace*{-3pt}\section{Closed-loop Dynamics}\label{Sec:CLD}
In this section, the closed-loop dynamics are expressed in terms of the system state, local estimation errors, and inter-agent estimation errors, which forms the basis for deriving the stability conditions in Sec.~\ref{Sec:SA}.
\HChange{This decomposes the closed-loop dynamics into a series of subsystems connected in 
feedforward, which facilitates the subsequent analysis. We obtain
}
\begin{align}
x(k)=&(A\!+\!BF)x(k-1) -\sum_{i=1}^N B_i F_i e_i(k-1)+v(k-1),
\label{eq:CLx}
\end{align}
where $e_i$ is the estimation error of agent $i$ defined by $e_i:=x-\hat{x}_i$,
\begin{align}
\label{eq:AgentErrorDyn}
\begin{split}
e_i&(k)=(I-LC) A e_i(k-1) + (I-LC) v(k-1)\\
&+(I-LC) \sum_{j=1}^N B_j F_j \epsilon_{ji}(k-1) +\xi(k) - d_i(k)\\
&+\sum_{j\in \comp{I}(k)} L_j C_j (A+BF)\epsilon_{ji}(k-1) -\sum_{j=1}^N L_j w_j(k)
\end{split}
\end{align}
with 
\begin{equation}
\xi(k):=\sum_{j\in \comp{I}(k)}L_j (y_j(k)-C_j \hat x_j(k|k-1)),
\label{eq:Deltai}
\end{equation}
where $\comp{I}(k)$ denotes the complement of $I(k)$ and $\epsilon_{ji}:=\hat{x}_j-\hat{x}_i$ refers to the inter-agent error, and 
\begin{equation}
\epsilon_{ji}(k)=A_\text{cl}(I(k))\epsilon_{ji}(k-1)+d_{ji}(k), \label{eq:InterAgentErr}
\end{equation}
with $d_{ji}$ defined as $d_{ji}:=d_j-d_i$, and
\begin{equation}
A_\text{cl}(I(k)):=(I-\sum_{m\in I(k)}L_m C_m)(A+BF). \label{eq:DefAcl}
\end{equation}

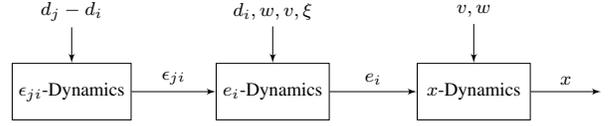
\begin{figure}
\centering
\ifsingleColumn
\resizebox{10.5cm}{!}{
\tikzstyle{int}=[draw, minimum size=2em]
\tikzstyle{init} = [pin edge={to-,thin,black}]
\tikzstyle{sum} = [draw, circle, node distance=1cm]
\tikzstyle{block} = [draw, rectangle, 
    minimum height=3em, minimum width=6em]

\begin{tikzpicture}[node distance=2.5cm,auto,>=latex']
	\node [coordinate] (start){};
	\node [above of=start, node distance=.4cm](dij){$d_j-d_i$};
	\node [block,below of=dij, node distance=1.2cm](eij){$\epsilon_{ji}$-Dynamics};
	\draw [->] (dij) -- (eij);
	\node [block, right of=eij, node distance=3cm](ei){$e_i$-Dynamics};
	\draw [->] (eij) -- node {$\epsilon_{ji}$} (ei);
	\node [above of=ei, node distance=1.2cm](dist1){$d_i, w, v, \xi$};
	\draw [->] (dist1) -- (ei);
	\node [block, right of=ei, node distance=3cm](x){$x$-Dynamics};
	\draw [->] (ei) -- node {$e_i$} (x);
	\node [above of=x, node distance=1.2cm](dist2){$v, w$};
	\draw [->] (dist2) -- (x);
	\node [right of=x, node distance=2cm](end){};
	\draw [->] (x) -- node {$x$} (end);
\end{tikzpicture}}
\else
\resizebox{8.5cm}{!}{
\tikzstyle{int}=[draw, minimum size=2em]
\tikzstyle{init} = [pin edge={to-,thin,black}]
\tikzstyle{sum} = [draw, circle, node distance=1cm]
\tikzstyle{block} = [draw, rectangle, 
    minimum height=3em, minimum width=6em]

\begin{tikzpicture}[node distance=2.5cm,auto,>=latex']
	\node [coordinate] (start){};
	\node [above of=start, node distance=.4cm](dij){$d_j-d_i$};
	\node [block,below of=dij, node distance=1.2cm](eij){$\epsilon_{ji}$-Dynamics};
	\draw [->] (dij) -- (eij);
	\node [block, right of=eij, node distance=3cm](ei){$e_i$-Dynamics};
	\draw [->] (eij) -- node {$\epsilon_{ji}$} (ei);
	\node [above of=ei, node distance=1.2cm](dist1){$d_i, w, v, \xi$};
	\draw [->] (dist1) -- (ei);
	\node [block, right of=ei, node distance=3cm](x){$x$-Dynamics};
	\draw [->] (ei) -- node {$e_i$} (x);
	\node [above of=x, node distance=1.2cm](dist2){$v, w$};
	\draw [->] (dist2) -- (x);
	\node [right of=x, node distance=2cm](end){};
	\draw [->] (x) -- node {$x$} (end);
\end{tikzpicture}}
\vspace*{-2pt}
\fi
\caption{\HChange{Simplified block diagram representing the closed-loop system as a feedforward connection of subsystems.} The disturbances $d_i$, $w$, $v$, and $\xi$ are bounded (either by assumption or by the event-triggering rule \eqref{eq:eventTrigger}).
}
\vspace*{-11pt}
\label{fig:strictFF}
\end{figure}

\ifsingleColumn
\else
\vspace*{-8pt}
\fi
\section{Stability Analysis}\label{Sec:SA}
Next, conditions on the observer gains $L_i$ are derived to guarantee stability of the closed-loop system. These conditions are expressed as LMIs and can be used for the synthesis of stabilizing observer gains $L_i$ as presented in Sec.~\ref{Sec:Perf}.

\HChange{Stability is discussed using the concept of input-to-state stability (ISS) as defined in \cite[Def.~3.1]{jiang2001input}. A feedforward connection of systems is ISS if each system is ISS by itself \cite[Cor.~1]{ChangingSupplyFunctions}. Since this applies to the closed-loop dynamics (see Fig.~\ref{fig:strictFF}), conditions guaranteeing ISS for each subsystem (i.e. the inter-agent dynamics \eqref{eq:InterAgentErr}, the agent error \eqref{eq:AgentErrorDyn}, and the system state \eqref{eq:CLx}) are derived first to subsequently conclude stability for the entire system.}

\subsection{Stability of the Inter-Agent Error}
For the subsequent analysis, the inter-agent error \eqref{eq:InterAgentErr} is regarded as a switched linear system under arbitrary switching. \HChange{While the event-based design will not typically lead to arbitrary switching, it is difficult to determine all possible communication patterns a-priori, without additional restrictions on the system's structure and the disturbances. However, the consideration of arbitrary switching provides a means to derive general stability conditions that can be expressed as LMIs.}
The following theorem establishes stability of the inter-agent error dynamics \eqref{eq:InterAgentErr} by means of a switched quadratic Lyapunov function. This result extends the one in previous work \cite{muehlebach2015lmi}, which employed a common Lyapunov function leading to a more conservative condition. 

\begin{theorem}
Let the matrix inequalities
\begin{align}
\begin{split}
A_\textnormal{cl}\T(\Pi_i) P_1 A_\textnormal{cl}(\Pi_i) - P_1 < 0, \quad &A_\textnormal{cl}\T(\Pi_i) P_1 A_\textnormal{cl}(\Pi_i) - P_2 < 0,\\
A_\textnormal{cl}\T(\emptyset) P_2 A_\textnormal{cl}(\emptyset) - P_2 < 0, \quad 
&A_\textnormal{cl}\T(\emptyset) P_2 A_\textnormal{cl}(\emptyset) - P_1 <0,
\end{split}\label{eq:IESCond1}
\end{align}
be fulfilled for symmetric positive definite matrices $P_1, P_2 \in \mathbb{R}^{n\times n}$, and for all $\Pi_i \in \Pi \setminus \emptyset$, where $\emptyset$ denotes the empty set and $\Pi$ the power set of $\{1,2,\dots,N\}$.
Then the inter-agent error \eqref{eq:InterAgentErr} is ISS. \label{Thm:IES1}
\end{theorem}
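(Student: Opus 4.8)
The plan is to construct a switched quadratic Lyapunov function for \eqref{eq:InterAgentErr} and to show that the four inequalities in \eqref{eq:IESCond1} certify its uniform decrease along every trajectory, irrespective of the switching sequence. To this end I would assign the Lyapunov matrix according to whether any agent transmits, setting
\[
P(J) := \begin{cases} P_1 & \text{if } J \neq \emptyset, \\ P_2 & \text{if } J = \emptyset, \end{cases}
\]
and take $V(k) := \epsilon_{ji}(k)\T P(I(k)) \epsilon_{ji}(k)$ as the candidate. Since $P_1, P_2$ are symmetric positive definite, $V$ obeys the sandwich bounds $\alpha_1 |\epsilon_{ji}(k)|^2 \leq V(k) \leq \alpha_2 |\epsilon_{ji}(k)|^2$ with $\alpha_1 := \min\{\lambda_{\min}(P_1), \lambda_{\min}(P_2)\} > 0$ and $\alpha_2 := \max\{\lambda_{\max}(P_1), \lambda_{\max}(P_2)\}$, supplying the two class-$\mathcal{K}_\infty$ bounds needed for the ISS-Lyapunov characterization behind \cite[Def.~3.1]{jiang2001input}.

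Next I would verify the decrease along the undisturbed dynamics $\epsilon_{ji}(k) = A_\text{cl}(I(k)) \epsilon_{ji}(k-1)$. Here the Lyapunov matrix multiplying $A_\text{cl}(I(k))$ is that of the \emph{current} mode $I(k)$, whereas the matrix subtracted is that of the \emph{previous} mode $I(k-1)$, so the relevant certificate is $A_\text{cl}\T(I(k)) P(I(k)) A_\text{cl}(I(k)) < P(I(k-1))$. Ranging $I(k)$ and $I(k-1)$ over $\{\emptyset\}$ and $\Pi \setminus \emptyset$ produces exactly the four inequalities of \eqref{eq:IESCond1}: the two with leading factor $P_1$ cover a nonempty current mode (transitioning from either a nonempty or an empty previous mode), and the two with leading factor $P_2$ cover an empty current mode. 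Because $\Pi$ is finite, the strict inequalities furnish a single rate $\rho \in (0,1)$ with $A_\text{cl}\T(J) P(J) A_\text{cl}(J) \leq \rho\, P(J')$ for every pair of modes $(J', J)$, whence $V(k) \leq \rho\, V(k-1)$ uniformly in the switching signal.

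To account for the disturbance I would expand $V(k)$ for the full recursion $\epsilon_{ji}(k) = A_\text{cl}(I(k)) \epsilon_{ji}(k-1) + d_{ji}(k)$ and bound the cross term by Young's inequality: for any $\theta > 0$,
\[
V(k) \leq (1+\theta)\, \epsilon_{ji}(k-1)\T A_\text{cl}\T(I(k)) P(I(k)) A_\text{cl}(I(k)) \epsilon_{ji}(k-1) + (1+\theta^{-1})\, d_{ji}(k)\T P(I(k)) d_{ji}(k).
\]
Invoking the contraction and picking $\theta$ small enough that $\hat\rho := (1+\theta)\rho < 1$ yields the dissipation inequality $V(k) \leq \hat\rho\, V(k-1) + c\, |d_{ji}(k)|^2$, with $c := (1+\theta^{-1}) \alpha_2$. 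Iterating this and using the sandwich bounds gives $|\epsilon_{ji}(k)|$ bounded by an exponentially decaying term in the initial condition plus a $\mathcal{K}_\infty$-gain of $\sup_\tau |d_{ji}(\tau)|$, which is the claimed ISS property.

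I expect the main obstacle to be the switching bookkeeping rather than any single estimate: one must confirm that the decrease certificate is required across every pair of consecutive modes (not merely within a fixed mode), and that the contraction rate is uniform over all switching sequences. The grouping of the inequalities by whether $I(k)$ is empty is precisely what makes a uniform $\rho$ attainable with only two Lyapunov matrices, so checking that the four listed inequalities genuinely exhaust all transition types is the crux; the passage from the dissipation inequality to ISS is then routine.
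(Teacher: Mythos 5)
Your proposal is correct and follows essentially the same route as the paper's proof: the identical switched Lyapunov function $V(k)=\epsilon_{ji}\T(k)P(I(k))\epsilon_{ji}(k)$ with $P_1$ for a nonempty and $P_2$ for an empty transmit set, the same identification of the four inequalities in \eqref{eq:IESCond1} with the four consecutive-mode transition types, and the same dissipation-and-iteration argument; your combination of a uniform contraction rate $\rho<1$ (extracted from the strict LMIs via finiteness of the mode set) with Young's inequality is just a cosmetic variant of the paper's completing-the-square step with parameter $\alpha$ and the negative maximal eigenvalue $\bar{\lambda}$. No gaps.
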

\begin{proof}
Consider a trajectory $\epsilon_{ji}(k)$, $k=1,2,\dots$, subjected to \eqref{eq:InterAgentErr} and starting at $\epsilon_{ji}(0)$. Let the trajectory $V$ be defined as
\begin{equation}
V(k)=\begin{cases} \epsilon_{ji}\T(k) P_1 \epsilon_{ji}(k) &I(k)\neq \emptyset\\
\epsilon_{ji}\T(k) P_2 \epsilon_{ji}(k) &I(k)=\emptyset,
\end{cases}\label{eq:LyapFun}
\end{equation}
$k=0,1,\dots$.
Note that $V(k) \geq 0$ for all $k$, where equality holds only if $\epsilon_{ji}(k)$ vanishes. Moreover, $V$ can be bounded by
\begin{align}
0 \leq & \munderbar\sigma |\epsilon_{ji}(k)|^2 \leq V(k) \leq \bar{\sigma} |\epsilon_{ji}(k)|^2,
\end{align}
where $\munderbar\sigma:=\min\{\sigma_{\text{min}}(P_1), \sigma_{\text{min}}(P_2)\}$ and $\bar{\sigma}:=\max\{\sigma_{\text{max}}(P_1), \sigma_{\text{max}}(P_2)\}$, and $\sigma_{\text{min}}(P)$, $\sigma_{\text{max}}(P)$ denote the minimum and maximum singular values of a matrix $P$.
The time evolution of $V$ is given by
\begin{align*}
V(k)&-V(k-1)=
2 d_{ji}\T(k) P_m A_{\text{cl}}(I(k)) \epsilon_{ji}(k-1)\\
&+\epsilon_{ji}\T(k-1)\left(A_\text{cl}\T(I(k)) P_m A_\text{cl}(I(k)) - P_l\right) \epsilon_{ji}(k-1)\\
&+d_{ji}\T(k) P_m d_{ji}(k),
\end{align*}
where $m \in\{1,2\}$, $l\in \{1,2\}$, depending on $I(k)$ and $I(k-1)$. Denoting the maximum eigenvalue of $A_\text{cl}\T(\Pi_i) P_m A_\text{cl}(\Pi_i) - P_l$ over all $\Pi_i \in \Pi$ by $\bar{\lambda}$, yields the bound
\ifsingleColumn
\begin{equation*}
V(k)-V(k-1)\leq 2 |d_{ji}(k)| |P_m A_\text{cl}(I(k))| |\epsilon_{ji}(k-1)|
+\bar{\lambda} |\epsilon_{ji}(k-1)|^2 + |P_m| |d_{ji}(k)|^2.
\end{equation*}
\else
\begin{multline*}
V(k)-V(k-1)\leq 2 |d_{ji}(k)| |P_m A_\text{cl}(I(k))| |\epsilon_{ji}(k-1)|\\
+\bar{\lambda} |\epsilon_{ji}(k-1)|^2 + |P_m| |d_{ji}(k)|^2.
\end{multline*}
\fi
Completing the squares with an $\alpha>0$ results in
\ifsingleColumn
\begin{align*}
V(k)-V(k-1)\leq (\bar{\lambda}+\alpha)& |\epsilon_{ji}(k-1)|^2 \!\!-\left(\sqrt{\alpha} |\epsilon_{ji}(k-1)| - \frac{|P_m A_{\text{cl}}(I(k))|}{\sqrt{\alpha}} |d_{ji}(k)|\right)^2\\
&\!\!+\left(  \frac{|P_m A_{\text{cl}}(I(k))|^2}{\alpha} + |P_m|\right) |d_{ji}(k)|^2.
\end{align*}
\else
\begin{align*}
V&(k)-V(k-1)\leq (\bar{\lambda}+\alpha) |\epsilon_{ji}(k-1)|^2 \\
&\!\!-\left(\sqrt{\alpha} |\epsilon_{ji}(k-1)| - \frac{|P_m A_{\text{cl}}(I(k))|}{\sqrt{\alpha}} |d_{ji}(k)|\right)^2\\
&\!\!+\left(  \frac{|P_m A_{\text{cl}}(I(k))|^2}{\alpha} + |P_m|\right) |d_{ji}(k)|^2.
\end{align*}
\fi
Therefore
\ifsingleColumn
\begin{align*}
V(k)-V(k-1)\leq (\bar{\lambda}+\alpha) |\epsilon_{ji}(k-1)|^2 
+ \left(  \frac{|P_m A_{\text{cl}}(I(k))|^2}{\alpha} + |P_m|\right) |d_{ji}(k)|^2
\end{align*}
\else
\begin{align*}
V(k)&-V(k-1)\leq (\bar{\lambda}+\alpha) |\epsilon_{ji}(k-1)|^2 \\
&+ \left(  \frac{|P_m A_{\text{cl}}(I(k))|^2}{\alpha} + |P_m|\right) |d_{ji}(k)|^2
\end{align*}
\fi
and consequently
\begin{align}
V(k)&\leq a V(k-1)+b |d_{ji}(k)|^2, \label{eq:evalV}
\end{align}
where 
\begin{align*}
b&:=\max_{\Pi_i \in \Pi, m\in\{1,2\}} \left(  \frac{|P_m A_{\text{cl}}(\Pi_i)|^2}{\alpha} + |P_m|\right),
a:=\frac{\bar{\lambda}+\alpha}{\munderbar\sigma}+1.
\end{align*}
By assumption, c.f. \eqref{eq:IESCond1}, $\bar{\lambda}$ is negative and therefore an $\alpha>0$ can be chosen such that $0<a<1$. As a consequence, \eqref{eq:evalV} implies that $V(k)$ remains bounded for all $k$. In particular, it follows that
\begin{align}
V(k) \leq a^k V(0) + b \sum_{l=0}^{k-1} a^l |d_{ji}(k-l)|^2,
\end{align}
and therefore
\begin{align}
|\epsilon_{ji}(k)|^2 \leq a^k \frac{\bar{\sigma}}{\munderbar \sigma} |\epsilon_{ji}(0)|^2 + \frac{b}{\munderbar \sigma} \sum_{l=0}^{k-1} a^l |d_{ji}(k-l)|^2.
\end{align}
The constants $\munderbar \sigma, \bar{\sigma}, a, b$ are all positive, which results in
\begin{equation}
|\epsilon_{ji}(k)| \leq a^{\frac{k}{2}} \sqrt{\frac{\bar{\sigma}}{\munderbar \sigma}} |\epsilon_{ji}(0)| + \sqrt{\frac{b}{\munderbar \sigma}} \sum_{l=0}^{k-1} a^{\frac{l}{2}} |d_{ji}(k-l)|, \label{eq:ejidecay}
\end{equation}
and proves that the inter-agent error is ISS.
\end{proof}

In Thm.~\ref{Thm:IES1}, the Lyapunov function is switched depending on whether there is communication or not. Using the Schur complement, the conditions \eqref{eq:IESCond1} can be rewritten as
\begin{equation}
\left( \begin{array}{cc} P_k & P_k (I-\sum_{m\in \Pi_i} L_m C_m)(A+BF)\\
* &P_l\end{array} \right)>0, \label{eq:SchurComp}
\end{equation}
for all $\Pi_i \in \Pi$ with $k=1$ if $\emptyset\not\in \Pi_i$, $k=2$ if $\Pi_i=\{\emptyset\}$ and $l=1,2$\HChange{, where the placeholder $*$ is implied by symmetry of the matrix}. Thus, using the change of variables $U_m=P_1 L_m$, the previous set of matrix inequalities is linear in $U_m$, $P_1$, and $P_2$ for all $m=1,2,\dots,N$ and can therefore be used as auxiliary condition for the synthesis of the observer gains $L_m$, as done in Sec.~\ref{Sec:Perf}.

By introducing a Lyapunov function that switches for each communication pattern (i.e. distinct $P_i$'s for each $\Pi_i\in \Pi$), and not only between the case of communication or no communication, the conservativeness of Thm.~\ref{Thm:IES1} could be reduced further. However, in that case the resulting stability conditions are not suitable for synthesis, as they are no longer linear in the decision variables. In addition, such an extension would result in a significant increase in the number of LMIs (number of LMIs of the order $2^{2N}$).

\subsubsection{Relaxation of the LMI conditions in Thm.~\ref{Thm:IES1}}
In the following, we aim  to reduce the number of LMI conditions required to guarantee inter-agent error stability. We first note that the result from \cite{muehlebach2015lmi} follows from Thm.~\ref{Thm:IES1} as a corollary,
\begin{corollary}
Let the matrix inequality
\begin{equation}
A_\textnormal{cl}\T(\Pi_i) P A_\textnormal{cl}(\Pi_i) - P < 0, \label{eq:IEScond1}
\end{equation}
be satisfied for a symmetric positive definite matrix $P\in \mathbb{R}^{n\times n}$ and for all $\Pi_i\in \Pi$, where $\Pi$ denotes the power set of $\{1,2,\dots,N\}$. Then the inter-agent error is ISS.\label{Cor:IES2}
\end{corollary}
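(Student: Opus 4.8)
The plan is to obtain Corollary \ref{Cor:IES2} as an immediate specialization of Theorem \ref{Thm:IES1}. Theorem \ref{Thm:IES1} proves ISS using a \emph{switched} quadratic Lyapunov function built from two matrices $P_1,P_2$, one associated with time steps featuring communication and one with time steps without. The corollary is the common-Lyapunov-function version, so the natural route is to collapse the two-matrix construction to a single matrix by setting $P_1=P_2=P$, and then to verify that the four families of matrix inequalities \eqref{eq:IESCond1} all reduce to the single hypothesis \eqref{eq:IEScond1}.

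Concretely, first I would assume a symmetric positive definite $P$ satisfying \eqref{eq:IEScond1} for all $\Pi_i\in\Pi$ (the full power set, empty set included). I would then substitute $P_1:=P$ and $P_2:=P$ into \eqref{eq:IESCond1} and check each of the four inequality families in turn. The first pair, $A_\textnormal{cl}\T(\Pi_i)P_1 A_\textnormal{cl}(\Pi_i)-P_1<0$ and $A_\textnormal{cl}\T(\Pi_i)P_1 A_\textnormal{cl}(\Pi_i)-P_2<0$ required for all $\Pi_i\in\Pi\setminus\emptyset$, both become $A_\textnormal{cl}\T(\Pi_i)P A_\textnormal{cl}(\Pi_i)-P<0$, which is exactly \eqref{eq:IEScond1} restricted to the nonempty subsets. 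The second pair, $A_\textnormal{cl}\T(\emptyset)P_2 A_\textnormal{cl}(\emptyset)-P_2<0$ and $A_\textnormal{cl}\T(\emptyset)P_2 A_\textnormal{cl}(\emptyset)-P_1<0$, both reduce to $A_\textnormal{cl}\T(\emptyset)P A_\textnormal{cl}(\emptyset)-P<0$, which is precisely \eqref{eq:IEScond1} for the remaining case $\Pi_i=\emptyset$. Since \eqref{eq:IEScond1} is assumed over all of $\Pi$, every one of the four requirements in \eqref{eq:IESCond1} is met with $P_1=P_2=P$. Applying Theorem \ref{Thm:IES1} then yields that the inter-agent error \eqref{eq:InterAgentErr} is ISS, which completes the argument.

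The step I expect to be the crux is not analytic but a matter of careful bookkeeping: one must confirm that the two index ranges appearing in \eqref{eq:IESCond1} (the nonempty subsets $\Pi\setminus\emptyset$ treated by the first pair, and the single empty-set case treated by the second pair) together exhaust the full power set $\Pi$ quantified in \eqref{eq:IEScond1}, with no inequality left over and none strengthened beyond \eqref{eq:IEScond1}. Once this matching of quantifier ranges is verified, the corollary follows with no further computation, since all the genuine work (completing the squares and the ISS estimate \eqref{eq:ejidecay}) has already been carried out in the proof of Theorem \ref{Thm:IES1}.
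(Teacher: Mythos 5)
Your proposal is correct and is exactly the paper's proof, which reads in its entirety ``Set $P_1=P_2$ in Thm.~\ref{Thm:IES1}''; your careful verification that the two quantifier ranges in \eqref{eq:IESCond1} (the nonempty subsets and the empty set) together exhaust the power set $\Pi$ is the bookkeeping the paper leaves implicit.
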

\begin{proof}
Set $P_1=P_2$ in Thm.~\ref{Thm:IES1}.
\end{proof}
\vspace*{-5pt}
%
The power set $\Pi$ has cardinality $2^N$, which leads to a rapid growth in the number of LMIs used to ensure inter-agent stability even in Cor.~\ref{Cor:IES2}. For a large number of agents, the corresponding synthesis problem may become intractable. Therefore the conditions from Cor.~\ref{Cor:IES2} are further relaxed, such that the number of LMIs scales linearly with the number of agents. This comes at the price of more conservative conditions.

\begin{corollary}\label{Cor:IES3}
Let the matrix inequalities
\begin{align}
H \geq \left( \begin{array}{cc} P &P(A+BF)\\ (A+BF)\T P &P\end{array}\right) &> 0,\label{eq:cond1}\\
\left( \begin{array}{cc} P &P (I-L_m C_m)(A+BF) \\
* &P \end{array} \right) &> \frac{N-1}{N} H \label{eq:cond2}
\end{align}
be satisfied for a symmetric positive definite matrix $H \in \mathbb{R}^{2n \times 2n}$, a symmetric positive definite matrix $P \in \mathbb{R}^{n \times n}$, and for all $m \in \{1,2,\dots,N\}$. Then the inter-agent error is ISS.
\end{corollary}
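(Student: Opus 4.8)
The plan is to show that the two inequalities \eqref{eq:cond1}, \eqref{eq:cond2} imply the hypothesis of Cor.~\ref{Cor:IES2}, i.e. that a single symmetric positive definite $P$ satisfies $A_\text{cl}\T(\Pi_i) P A_\text{cl}(\Pi_i) - P < 0$ for \emph{every} $\Pi_i \in \Pi$; ISS then follows directly from Cor.~\ref{Cor:IES2}. By the Schur complement (using $P>0$), this target inequality is equivalent to positive definiteness of the $2n\times 2n$ block matrix
\[
M(\Pi_i) := \begin{pmatrix} P & P A_\text{cl}(\Pi_i) \\ A_\text{cl}\T(\Pi_i) P & P \end{pmatrix},
\]
so it suffices to prove $M(\Pi_i) > 0$ for all $\Pi_i \in \Pi$ using only the $N+1$ matrix inequalities at hand.

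The key observation is that $M(\Pi_i)$ is \emph{affine} in the index set. Writing $s := |\Pi_i|$, denoting by $M_0$ the matrix in the middle of \eqref{eq:cond1} (which equals $M(\emptyset)$) and by $M_m$ the left-hand side of \eqref{eq:cond2} (which equals $M(\{m\})$), I would verify by direct block computation that
\[
M(\Pi_i) = (1-s)\,M_0 + \sum_{m\in\Pi_i} M_m .
\]
This holds because both diagonal blocks of $M_0$ and of every $M_m$ equal $P$, while the off-diagonal blocks accumulate exactly the terms $-P L_m C_m (A+BF)$, reproducing $P A_\text{cl}(\Pi_i) = P(I-\sum_{m\in\Pi_i} L_m C_m)(A+BF)$.

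Finally, I would bound this expression using \eqref{eq:cond1} and \eqref{eq:cond2}. For $s\geq 1$ the factor $(s-1)\geq 0$ combined with $M_0 \leq H$ gives $(1-s)M_0 \geq (1-s)H$, while \eqref{eq:cond2} gives $\sum_{m\in\Pi_i} M_m > \frac{s(N-1)}{N} H$; adding the two (nonstrict plus strict) yields
\[
M(\Pi_i) > \Big[(1-s) + \tfrac{s(N-1)}{N}\Big] H = \tfrac{N-s}{N}\,H \geq 0,
\]
since $s = |\Pi_i| \leq N$ and $H>0$. The case $s=0$ is immediate from \eqref{eq:cond1}. Hence $M(\Pi_i)>0$ for every $\Pi_i$, and Cor.~\ref{Cor:IES2} applies.

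The crux of the argument is the sign bookkeeping in this last step: because $(1-s)$ is negative for nonsingleton sets, the bound $M_0 \leq H$ must be used with a flipped inequality, and the coefficient $\tfrac{N-1}{N}$ in \eqref{eq:cond2} is calibrated precisely so that combining up to $N$ singleton conditions against the over-subtracted copies of $M_0$ leaves the nonnegative residual $\tfrac{N-s}{N} H$. Getting this constant right — rather than the affine decomposition itself — is the main obstacle.
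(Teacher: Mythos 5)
Your proof is correct and follows essentially the same route as the paper's: the identical affine decomposition of $A_\text{cl}(\Pi_i)$ (equivalently, of the Schur-complement block matrix $M(\Pi_i)$ into the no-communication matrix $M_0$ and the singleton matrices $M_m$), combined with the same scalar calibration $|\Pi_i|(N-1)/N \geq |\Pi_i|-1$, i.e.\ $N-|\Pi_i|\geq 0$. The only cosmetic difference is in the bookkeeping: you bound $M_0$ upward by $H$ against the negative coefficient $(1-s)$ and keep the residual $\tfrac{N-s}{N}H\geq 0$, whereas the paper first replaces $H$ by $M_0$ via \eqref{eq:cond1} and then compares the coefficients of $M_0$ directly — the two are equivalent.
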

\begin{proof}
Applying the Schur complement to \eqref{eq:IEScond1} results in \begin{equation}
\left( \begin{array}{cc} P & P A_\text{cl}(\Pi_i)\\
A_\text{cl}(\Pi_i)\T P & P \end{array} \right)>0, \label{eq:IEScond1Schur}
\end{equation}
for all $\Pi_i \in \Pi$ and therefore \eqref{eq:cond1} implies \eqref{eq:IEScond1} for $\Pi_i = \emptyset$.
Note that the sum in the expression of $A_\text{cl}(\Pi_i)$ can be rearranged to
\begin{align*}
A_\text{cl}&(\Pi_i)\!=\!-(|\Pi_i|\!-\!1) (A\!+\!BF) + \!\sum_{m \in \Pi_i} \!(I \!- \!L_m C_m) (A\!+\!BF),
\end{align*}
such that the LMI \eqref{eq:IEScond1Schur} can be reformulated as
\ifsingleColumn
\begin{equation}
\sum_{m\in \Pi_i} \left( \begin{array}{cc} P & P(I-L_m C_m)(A+BF) \\
* & P\end{array} \right) > 
(|\Pi_i|-1) \left( \begin{array}{cc} P & P(A+BF)\\
* & P \end{array}\right), \label{eq:tmp123}
\end{equation}
\else
\begin{multline}
\sum_{m\in \Pi_i} \left( \begin{array}{cc} P & P(I-L_m C_m)(A+BF) \\
* & P\end{array} \right) > \\
(|\Pi_i|-1) \left( \begin{array}{cc} P & P(A+BF)\\
* & P \end{array}\right), \label{eq:tmp123}
\end{multline}
\fi
for all $\Pi_i \in \Pi$. In contrast, combining \eqref{eq:cond1} and \eqref{eq:cond2} leads to
\ifsingleColumn
\begin{equation*}
\sum_{m\in \Pi_i} \left( \begin{array}{cc} P & P(I-L_m C_m)(A+BF) \\
* & P\end{array} \right) > 
\frac{N-1}{N} |\Pi_i| \left( \begin{array}{cc} P &P(A+BF)\\ (A+BF)\T P &P\end{array}\right)
\end{equation*}
\else
\begin{multline*}
\sum_{m\in \Pi_i} \left( \begin{array}{cc} P & P(I-L_m C_m)(A+BF) \\
* & P\end{array} \right) > \\ 
\frac{N-1}{N} |\Pi_i| \left( \begin{array}{cc} P &P(A+BF)\\ (A+BF)\T P &P\end{array}\right)
\end{multline*}
\fi
for all $\Pi_i \in \Pi \setminus \emptyset$.
It holds that $|\Pi_i| (N-1)/N \geq (|\Pi_i|-1)$, and therefore \eqref{eq:cond1} and \eqref{eq:cond2} imply \eqref{eq:tmp123} (and thereby also \eqref{eq:IEScond1}) for all $\Pi_i \in \Pi \setminus \emptyset$, which concludes the proof.
\end{proof}
In Sec.~\ref{Sec:SimExamples}, the different stability conditions are compared by means of simulation examples.

\begin{remark}
In case the open-loop system is unstable, it is essential for guaranteeing inter-agent error stability that each agent reconstructs the input $u$ based on its current state estimate $\hat{x}_i$, as opposed to the case where all agents have access to the true input $u$ (proposed in \cite{Sebastian-CDC,Tr12}). This seems counterintuitive, as providing the agents with more information should potentially improve the closed-loop performance. The mechanism leading to a destabilization is further discussed and illustrated on a simple example in
\ifArxiv
App.~\ref{App:ComIn}.
\else
\cite[App.~F]{extendedVersion}.
\fi
\end{remark}

\vspace*{-3pt}\subsection{Stability of the Agent Error}
 
Stability of the agent error \eqref{eq:AgentErrorDyn} follows directly from the agent-error dynamics \eqref{eq:AgentErrorDyn}, the inter-agent error being bounded, and the communication protocol, which bounds the disturbance $\xi$. 
\begin{lemma}
Let the inter-agent errors $\epsilon_{ji}$, $j=1,2,\dots,N$ be bounded. Then the agent error $e_i$ is ISS if and only if the eigenvalues of $(I-LC)A$ have magnitude strictly less than one.
\label{Lem:ES}
\end{lemma}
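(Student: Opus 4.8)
The plan is to treat the agent-error recursion \eqref{eq:AgentErrorDyn} as a linear time-invariant system in $e_i$ with system matrix $(I-LC)A$ that is driven by an aggregate bounded input. First I would collect every term on the right-hand side of \eqref{eq:AgentErrorDyn} other than $(I-LC)A\,e_i(k-1)$ into a single signal $\tilde{d}_i(k)$, so that the dynamics read $e_i(k) = (I-LC)A\,e_i(k-1) + \tilde{d}_i(k)$. The next step is to verify that $\tilde{d}_i$ is bounded: the terms $v$, $w_j$, and $d_i$ are bounded by assumption; the terms containing $\epsilon_{ji}$ (both $(I-LC)\sum_j B_jF_j\epsilon_{ji}$ and $\sum_{j\in\comp{I}(k)}L_jC_j(A+BF)\epsilon_{ji}$) are finite sums of bounded signals by the hypothesis of the lemma; and $\xi$ in \eqref{eq:Deltai} is bounded because the trigger \eqref{eq:eventTrigger} guarantees $|y_j(k)-C_j\hat{x}_j(k|k-1)| < |\Delta_j|$ for every $j\in\comp{I}(k)$. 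Hence $\tilde{d}_i$ is a bounded function of the exogenous inputs $(\epsilon_{ji},v,w,d_i)$, and ISS of $e_i$ reduces to ISS of the above LTI recursion with respect to $\tilde{d}_i$.

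For the sufficiency direction I would assume $(I-LC)A$ to be Schur and argue exactly as in the proof of Thm.~\ref{Thm:IES1}: since all eigenvalues of $(I-LC)A$ lie strictly inside the unit circle, the discrete Lyapunov equation admits a solution $P>0$ with $((I-LC)A)\T P (I-LC)A - P < 0$. Using $V(k)=e_i\T(k)Pe_i(k)$, completing the squares with a parameter $\alpha>0$ as before, and choosing $\alpha$ small enough yields a one-step contraction $V(k)\le a\,V(k-1)+b\,|\tilde{d}_i(k)|^2$ with $0<a<1$. Unrolling this recursion produces the class-$\mathcal{KL}$ plus class-$\mathcal{K}$ estimate $|e_i(k)| \le a^{k/2}\sqrt{\bar\sigma/\munderbar\sigma}\,|e_i(0)| + \sqrt{b/\munderbar\sigma}\sum_{l=0}^{k-1}a^{l/2}|\tilde{d}_i(k-l)|$, which is the ISS bound in the sense of \cite[Def.~3.1]{jiang2001input}. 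Equivalently, one may bound $\|((I-LC)A)^k\|\le c\rho^k$ with $\rho\in(0,1)$ directly and sum the variation-of-constants formula.

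For necessity I would prove the contrapositive: if $(I-LC)A$ has an eigenvalue $\lambda$ with $|\lambda|\ge 1$, set all disturbances and inter-agent errors to zero, so that $\tilde{d}_i\equiv 0$, and initialize $e_i(0)$ at a corresponding eigenvector (taking real and imaginary parts in the complex case). The resulting zero-input trajectory behaves like $|\lambda|^k$ and does not converge to the origin; this violates the $0$-GAS property implied by ISS, so $e_i$ cannot be ISS. Combining both directions establishes the equivalence.

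The main obstacle is conceptual rather than computational: one must make precise that the switched, state-dependent terms indexed by $\comp{I}(k)$ genuinely act as a bounded exogenous input — this is precisely where the event-trigger bound on $\xi$ and the standing boundedness of $\epsilon_{ji}$ enter — so that the otherwise routine LTI characterization of ISS applies verbatim. The remaining algebra (the Lyapunov or variation-of-constants estimate and the eigenvector construction) is standard.
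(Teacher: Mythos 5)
Your proposal is correct and takes essentially the same route as the paper: the paper's proof likewise reduces sufficiency to boundedness of the aggregate input (bounding $|\xi(k)|$ by $\sum_{i=1}^N |L_i\Delta_i|$ via the triangle inequality, submultiplicativity, and the trigger \eqref{eq:eventTrigger}) and then invokes the standard ISS characterization of a Schur LTI system, which you merely spell out more explicitly with the Lyapunov/variation-of-constants estimate. The only minor difference is in the necessity direction, where the paper contradicts ISS by choosing the disturbances $d_i$ parallel to an eigenvector with $|\lambda|\geq 1$ rather than your zero-input, eigenvector-initial-condition argument via $0$-GAS; both are valid given the paper's convention (see Fig.~\ref{fig:strictFF}) that $\xi$ is treated as an exogenous bounded disturbance of the $e_i$-subsystem.
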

\begin{proof}
See 
\ifArxiv
App.~\ref{App:LeES}.
\else
\cite[App.~A]{extendedVersion}.
\fi
\end{proof}

We remark that $(I-LC)A$ corresponds to the error dynamics for the estimator \eqref{eq:EBSE1}, \eqref{eq:EBSE2} with full communication; that is, stability of $(I-LC)A$ is a natural requirement for the estimator design. \HChange{Due to the detectability of $(A,C)$, the existence of such estimator gains $L$ is guaranteed.}

\subsection{Stability of the Closed-loop System}
By combining the previous results, conditions for the closed-loop dynamics to be ISS can be established. Provided that the agent error is bounded, it follows from \eqref{eq:CLx} that the state $x$ is ISS, since, by assumption, $A+BF$ has all eigenvalues strictly within the unit circle. This leads to the following conclusion:

\begin{theorem}\label{Thm:CL}
Let the eigenvalues of $A+BF$ have magnitude strictly less than one. The closed-loop system is ISS if both, the agent error \eqref{eq:AgentErrorDyn} and the inter-agent error  \eqref{eq:InterAgentErr} are ISS.
\end{theorem}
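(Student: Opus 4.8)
The plan is to exploit the feedforward (cascade) structure of the closed-loop dynamics depicted in Fig.~\ref{fig:strictFF}, together with the cited composition result \cite[Cor.~1]{ChangingSupplyFunctions} stating that a feedforward interconnection of ISS subsystems is again ISS. Two of the three subsystems---the inter-agent error \eqref{eq:InterAgentErr} and the agent error \eqref{eq:AgentErrorDyn}---are ISS by hypothesis, so the only remaining work is to verify that the state subsystem \eqref{eq:CLx} is ISS when the agent errors $e_i$ and the disturbance $v$ are treated as its inputs. Once this is done, the chain $\epsilon_{ji} \to e_i \to x$ is a feedforward connection of three ISS systems, and the claim follows directly from the composition result.

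First I would establish ISS of the $x$-dynamics \eqref{eq:CLx}. Since $A+BF$ is Schur stable by assumption, the discrete Lyapunov equation $(A+BF)\T P_x (A+BF) - P_x = -Q$ admits a symmetric positive definite solution $P_x$ for any chosen $Q>0$. Using $V_x(k)=x(k)\T P_x x(k)$ and completing the squares with respect to the input term $-\sum_{i=1}^{N} B_i F_i e_i(k-1) + v(k-1)$---exactly as carried out for $V$ in the proof of Thm.~\ref{Thm:IES1}---one obtains a bound of the form $V_x(k) \leq a_x V_x(k-1) + b_x \big( \sum_{i=1}^{N} |e_i(k-1)| + |v(k-1)| \big)^2$ with $0<a_x<1$ for a suitable completion parameter. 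Iterating this inequality and taking square roots (again mirroring the passage from \eqref{eq:evalV} to \eqref{eq:ejidecay}) yields an ISS estimate for $x$ in terms of $|x(0)|$ and the suprema of the inputs $e_i$ and $v$, so the state subsystem meets the ISS definition \cite[Def.~3.1]{jiang2001input}.

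With all three subsystems established as ISS---the inter-agent error and agent error by hypothesis, the state by the above---I would then invoke \cite[Cor.~1]{ChangingSupplyFunctions} to conclude that the full feedforward interconnection \eqref{eq:CLx}, \eqref{eq:AgentErrorDyn}, \eqref{eq:InterAgentErr} is ISS with respect to the exogenous disturbances $v$, $w_i$, $d_i$, and $\xi$ (the last being bounded through the event-triggering rule \eqref{eq:eventTrigger}). This final step is immediate and requires no further estimates.

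The only nontrivial point I anticipate is the bookkeeping of the first step: confirming that the $x$-subsystem genuinely fits the ISS framework with the $e_i$ entering as exogenous inputs rather than as coupled states, and that the cascade is strictly feedforward (no feedback from $x$ back into the error dynamics), which one reads off directly from \eqref{eq:CLx}--\eqref{eq:InterAgentErr}. Since the ISS of a Schur-stable linear system driven by bounded signals is a standard fact, the substance of the theorem resides in the earlier ISS results for the two error subsystems and in the composition lemma; the present statement is essentially their assembly.
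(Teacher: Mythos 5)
Your proposal is correct and follows essentially the same route as the paper: the paper likewise views the closed loop as the strict feedforward cascade $\epsilon_{ji}\to e_i \to x$, notes that the $x$-subsystem \eqref{eq:CLx} is ISS because $A+BF$ is Schur stable with the $e_i$ and $v$ entering as inputs (and $\xi$ bounded by the trigger \eqref{eq:eventTrigger}), and concludes via the feedforward composition result \cite[Cor.~1]{ChangingSupplyFunctions}. Your explicit Lyapunov verification for the $x$-dynamics simply spells out the standard fact that the paper asserts without computation.
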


\section{Performance Analysis and Synthesis}\label{Sec:Perf}

In this section, 
a general $\mathcal{H}_2$ performance measure is introduced that can capture both estimation performance and communication requirements. LMI-conditions will be established guaranteeing a worst-case performance. 
Moreover, a unified synthesis procedure for the distributed and event-based estimator \eqref{eq:EBSE1}, \eqref{eq:EBSE2} is presented that combines stability requirements (from Sec.~\ref{Sec:SA}) and performance criteria.

\HChange{We will focus on the design of the estimator gains $L_i$ and the communication thresholds $\Delta_i$. However, a similar approach could be used to synthesize the feedback gain $F$ subject to the stability conditions provided by Thm.~\ref{Thm:IES1}, Cor.~\ref{Cor:IES2}, or Cor.~\ref{Cor:IES3}. Likewise, $\mathcal{H}_2$ or $\mathcal{H}_\infty$ performance measures could be included in the design. The resulting synthesis procedures are very similar to the ones presented herein 
and thus not discussed in detail.
}

\vspace*{-3pt}\subsection{Performance Measure}
To simplify the derivation of the performance metric, 
we assume  that the disturbances $d_i$ are absent and that all agents are initialized with the same state estimate. \HChange{According to \eqref{eq:InterAgentErr}, this implies $\epsilon_{ji}(k)=0$ for all $k$
(i.e., all agents' estimates are identical), and as a result, we formulate a performance metric based on the estimation error $e_i$ of a single agent.
We emphasize that this simplification only serves to obtain a tractable performance criterion;  
the final synthesis procedure is then augmented with conditions from the previous section ensuring ISS of the closed-loop system and account for the general case of nonzero disturbances $d_i$.}

With the above assumptions, the estimation error \eqref{eq:AgentErrorDyn} simplifies to
\ifsingleColumn
\begin{equation}
e_i(k)=(I-LC)A e_i(k-1) + (I-LC)v(k-1) - L w(k) + \xi(k).
\label{eq:SimpAgentErrorDyn}
\end{equation}
\else
\begin{align}
\begin{split}
e_i(k)=(I-LC)A e_i(k-1)& + (I-LC)v(k-1) \\
&- L w(k) + \xi(k).
\end{split}\label{eq:SimpAgentErrorDyn}
\end{align}
\fi
The disturbance $\xi(k)$, as defined in \eqref{eq:Deltai}, can be reformulated as $\xi(k)=L \Delta s_1(k)$,
where 
\begin{align}
\Delta&:=\diag( \Delta_1, \Delta_2, \dots, \Delta_N) \in \mathbb{R}^{p \times p},\\
s_{1}(k)&:=(s_{11}\T(k),s_{12}\T(k),\dots,s_{1N}\T(k))\T \in \mathbb{R}^{p},\\
s_{1i}(k)&:= \chi_{i\in \comp{I}}(k) q_i(k) \in \mathbb{R}^{p_i},\\
q_i(k)&:= \Delta_i^{-1} (y_i(k)-C_i \hat{x}_i(k|k-1)) \in \mathbb{R}^{p_i},
\end{align}
and $\chi_{i\in \comp{I}}(k)$ denotes the indicator function, that is, $\chi_{i\in \comp{I}}(k)=1$ if $i\in \comp{I}(k)$ and 0 otherwise, for $k\in \mathbb{N}$ and $i=1,2,\dots,N$.
Note that the signal $q$ is directly related to the communication since a transmission is triggered if $|q_i(k)| > 1$. Furthermore, the communication protocol guarantees that $|s_{1i}(k)|$ is strictly less than one. The agent error dynamics \eqref{eq:SimpAgentErrorDyn} can be represented by the block diagram shown in Fig.~\ref{Fig:AgentErr}.

The communication protocol results in nonlinear feedback terms because of the switching behavior of the event triggers.  Therefore, the direct minimization of the performance criterion (to be made precise below) is difficult. 
Instead, we minimize an upper bound, which is obtained by
 considering the worst-case performance with respect to all perturbations $s_{1i}(k)$ with Euclidean norm less than one. This leads to a robust control problem and, as a consequence, the resulting synthesis procedure can be formulated as a convex optimization problem.

\begin{figure}
\graphicspath{{media/}}
\centering
\ifsingleColumn
\def\svgwidth{0.45\columnwidth}
\else
\def\svgwidth{0.55\columnwidth} 
\fi
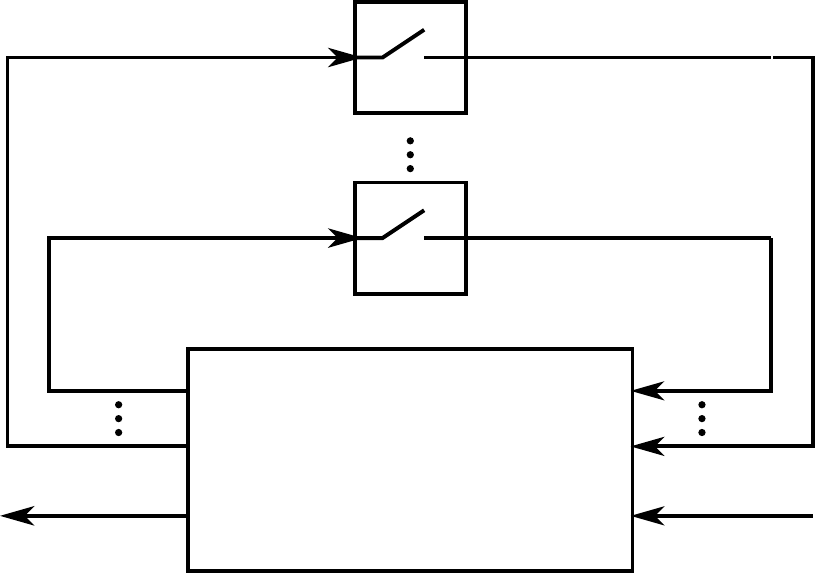
\vspace*{-3pt}
\caption{Block diagram of the simplified agent error dynamics \eqref{eq:SimpAgentErrorDyn}. The error $e_i$ is driven by the external disturbances $v$ and $w$. The switches and the signals $q_i$ and $s_{1i}$ are used to model the event-based communication. Based on the magnitude of the signal $q_i(k)$ at time instant $k$, the $i$th switch is either closed (no communication in case $|q_i(k)|<1$) implying $q_i(k)=s_{1i}(k)$, or opened (communication in case $|q_i(k)|\geq 1$) implying $s_{1i}(k):=0$.}
\label{Fig:AgentErr}
\end{figure}

The power semi-norm \cite[p. 816]{Oppenheim} is used as performance objective:\vspace*{-2pt}
\begin{equation}
\vspace{-11pt}||z||_\mathcal{P}:=\lim_{K\rightarrow \infty} \sqrt{ \frac{1}{K} \sum_{k=1}^K z\T(k) z(k) }, 
\end{equation}
where 
\begin{equation}
z(k):=\hat{C} e_i(k-1) + \hat{D}_{21} w(k) + \hat{D}_{22} v(k-1), \label{eq:DefZ}
\end{equation}
with $\hat{C}$, $\hat{D}_{21}$, $\hat{D}_{22}$ arbitrary matrices of appropriate dimensions. In particular, \eqref{eq:DefZ} allows for the choices 
\ifsingleColumn
$z(k)=(q_1\T(k),\dots,$ $q_N\T(k))\T$
\else
$z(k)=(q_1\T(k),q_2\T(k),\dots,q_N\T(k))\T$
\fi
and $z(k)=e_i(k)$, which can be used to reduce, respectively, average communication and estimation error, as shall be demonstrated later. 

In the following, a synthesis procedure for the observer gains $L_i$ and the communication thresholds $\Delta_i$ is developed, which seeks to minimize $||z||_\mathcal{P}$. 
However, for the reasons stated above, we do not minimize $||z||_\mathcal{P}$ directly, but an upper bound,
which is formulated in terms of $\mathcal{H}_2$ and $\mathcal{H}_\infty$ norms.
Expressing the $\mathcal{H}_2$ and $\mathcal{H}_\infty$ norms using LMIs, see e.g. \cite{de2002extended}, leads to the following result:
\begin{theorem}
Let the disturbances $v(k)$ and $w_i(k)$ be bounded, zero mean, independent and identically distributed for all $k$ with covariances $V$ and $W_i$, respectively, $i=1,2,\dots, N$. 
Define
\small
\begin{align*}
&\hat{A}:=(I-LC)A, \qquad \quad
\quad ~\hat{B}_{2}:=\begin{bmatrix} -LW^{\frac{1}{2}} \! & (I-LC)V^{\frac{1}{2}} \end{bmatrix}, \\
&\hat{D}_{2}:=\begin{bmatrix} \hat{D}_{21} W^{\frac{1}{2}} \! &\hat{D}_{22} V^{\frac{1}{2}}\end{bmatrix}, W:=\diag(W_1, W_2,\dots, W_N),
\end{align*}%
\normalsize
and let the matrix inequalities %
\small
\begin{gather}
\left( \begin{array}{ccc} I &0 &\hat{C}\\
							 	0 &P &P\hat{A}\\
								\hat{C}\T &\hat{A}\T P &P\end{array} \right)>0,
\left( \begin{array}{ccc} I &  0& \hat{D}_{2} \\ 
0 &P &P\hat{B}_{2}\\
\hat{D}_{2}\T &\hat{B}_{2}\T P &X\end{array} \right) >0, \label{eq:H2lmi}\\
\vspace{-30pt}\left( \begin{array}{cccc} Q &\hat{A}Q &L \Delta &0\\
						Q\hat{A}\T &Q &0 &Q\hat{C}^{\mathrm{T}} \\
						\Delta L\T &0 &I &0\\
						0 &\hat{C}Q &0 &\gamma I\end{array} \right)>0,\label{eq:Hinflmi}
\end{gather}
\normalsize
be fulfilled for symmetric matrices $P \in \mathbb{R}^{n\times n}$, $Q \in \mathbb{R}^{n\times n}$, $X \in \mathbb{R}^{(n+p) \times (n+p)}$, and a scalar $\gamma\in \mathbb{R}$. Then it holds that
\begin{equation}
||z||_\mathcal{P} < \sqrt{N \gamma} + \sqrt{\tr{X}}. \label{eq:UB}
\end{equation}\label{Thm:Performance}
\end{theorem}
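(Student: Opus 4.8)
The plan is to exploit the linearity of the simplified error dynamics \eqref{eq:SimpAgentErrorDyn} together with the output map \eqref{eq:DefZ}, and to split the performance output by superposition. Writing $\xi(k)=L\Delta s_1(k)$ as in \eqref{eq:Deltai}, the output $z$ is produced by a stable linear system (feasibility of the first block of \eqref{eq:H2lmi} certifies via Schur complement that $\hat A=(I-LC)A$ is Schur stable, consistent with Lem.~\ref{Lem:ES}) driven by two groups of inputs: the stochastic disturbances $v,w$ and the deterministic signal $s_1$, whose blocks satisfy $|s_{1i}(k)|<1$ by the triggering rule \eqref{eq:eventTrigger}. Accordingly I would write $z=z_a+z_b$, where $z_a$ is the response to $(v,w)$ with $s_1\equiv 0$ and $z_b$ is the response to $s_1$ with $v\equiv w\equiv 0$. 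Since the power semi-norm is an $\ell_2$-type average over $k$ and hence obeys Minkowski's inequality, this yields
\begin{equation*}
||z||_\mathcal{P}\le ||z_a||_\mathcal{P}+||z_b||_\mathcal{P},
\end{equation*}
so it suffices to bound the two terms by $\sqrt{\tr{X}}$ and $\sqrt{N\gamma}$, respectively.

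For the stochastic part I would whiten the noise by setting $\tilde w=W^{-\frac12}w$, $\tilde v=V^{-\frac12}v$, so that the map from the unit-covariance white input $\eta(k):=(\tilde w\T(k),\tilde v\T(k-1))\T$ to $z_a$ is exactly the LTI system with matrices $(\hat A,\hat B_{2},\hat C,\hat D_{2})$ defined in the statement (a direct substitution verifies $\hat B_2\eta(k)=-Lw(k)+(I-LC)v(k-1)$ and $\hat D_2\eta(k)=\hat D_{21}w(k)+\hat D_{22}v(k-1)$). Because $v$ and $w$ are zero mean, i.i.d. and mutually independent, $\eta$ is white with covariance $I$, and the time average $\frac1K\sum_k z_a\T z_a$ converges to the steady-state variance $\mathbb{E}[z_a\T z_a]$, which equals the squared $\mathcal{H}_2$ norm of this system. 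The first pair of LMIs \eqref{eq:H2lmi} is the extended (dilated) $\mathcal{H}_2$ characterization of \cite{de2002extended}: feasibility for some $P>0$ and $X$ certifies that this squared $\mathcal{H}_2$ norm is bounded by $\tr{X}$. Hence $||z_a||_\mathcal{P}^2\le\tr{X}$, i.e. $||z_a||_\mathcal{P}\le\sqrt{\tr{X}}$.

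For the deterministic part, the subsystem from $s_1$ to $z_b$ has realization $e_i(k)=\hat A e_i(k-1)+L\Delta s_1(k)$, $z_b(k)=\hat C e_i(k-1)$, with no direct feedthrough. The LMI \eqref{eq:Hinflmi} is the extended bounded-real-lemma characterization of \cite{de2002extended}, so its feasibility certifies that the $\mathcal{H}_\infty$ norm of this subsystem is strictly less than $\sqrt{\gamma}$. Invoking the fact that the $\mathcal{H}_\infty$ norm is the induced gain on power-bounded signals (via the spectral bound $\tr(G\Phi_{s_1}G^*)\le ||G||_\infty^2\,\tr(\Phi_{s_1})$), I obtain $||z_b||_\mathcal{P}\le\sqrt{\gamma}\,||s_1||_\mathcal{P}$. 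Finally, the triggering rule forces $|s_1(k)|^2=\sum_{i=1}^N|s_{1i}(k)|^2<N$ for every $k$, whence $||s_1||_\mathcal{P}\le\sqrt{N}$ and $||z_b||_\mathcal{P}\le\sqrt{N\gamma}$. Combining the two bounds with the triangle inequality established above gives \eqref{eq:UB}, with the strictness of the LMIs propagating to the strict inequality in the claim.

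The hard part will not be the algebra but the rigorous handling of the power semi-norm. Specifically, one must (i) justify that the time-averaged power of $z_a$ genuinely converges to the stationary $\mathcal{H}_2$ value even though $s_1$ and $(v,w)$ are statistically coupled through the switching feedback of the event generator — which is precisely why the superposition split into $z_a$ and $z_b$ is essential, as it decouples the stochastic and the deterministic analyses — and (ii) verify the induced power-gain property of the $\mathcal{H}_\infty$ norm in the semi-norm sense, together with the correct reading of the dilated inequalities \eqref{eq:H2lmi}--\eqref{eq:Hinflmi} as bona fide $\mathcal{H}_2$ and $\mathcal{H}_\infty$ certificates rather than the standard non-dilated Lyapunov forms.
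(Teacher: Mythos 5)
Your proposal is correct and takes essentially the same route as the paper's proof: your superposition split $z=z_a+z_b$ is exactly the paper's decomposition of $z$ into $g_{2}*s_{2}$ and $g_1*s_1$, followed by the semi-norm triangle inequality, the induced power-gain bound $||g_1*s_1||_\mathcal{P}\leq ||G_1||_\infty\sqrt{N}$, the white-noise identity $||g_{2}*s_{2}||_\mathcal{P}=||G_{2}||_2$, and the reading of \eqref{eq:Hinflmi} and \eqref{eq:H2lmi} as the dilated $\mathcal{H}_\infty$ and $\mathcal{H}_2$ certificates of \cite{de2002extended}. Your remark that the superposition decouples the trigger-induced statistical coupling between $s_1$ and $(v,w)$ is precisely the (implicit) reason the paper's argument is sound as well.
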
\vspace*{-24pt}
\begin{proof}
See 
\ifArxiv
App.~\ref{App:ThmVI1}.
\else
\cite[App.~B]{extendedVersion}.
\fi
\end{proof}\vspace*{-7pt}
The bound \eqref{eq:UB} consists of two terms: The expression $\sqrt{\tr{X}}$ captures the $\mathcal{H}_2$ gain from the \HChange{disturbances} $v$, $w$ to the signal $z$, whereas the expression $\sqrt{N \gamma}$ captures the $\mathcal{H}_\infty$ gain from the signal $s_1$ to the signal $z$, and bounds as such the effect of the nonlinear feedback due to the event-based communication, see Fig.~\ref{Fig:AgentErr}. In the full communication scenario it holds that $s_{1i}=0$, and therefore the agent error reduces to a linear system excited by the \HChange{disturbances} $v$ and $w$, which implies $||z||_\mathcal{P}< \sqrt{\tr{X}}$. Hence, the term $\sqrt{\tr{X}}$ corresponds to the performance in the full communication case and represents a lower bound on the achievable performance in the event-based scenario, which is attained for $\Delta_i \rightarrow 0$. The term $\sqrt{N \gamma}$ bounds the effect of the disturbance $s_1$ due to the event-based communication.


\vspace*{-6pt}\subsection{Synthesis}\label{Sec:SubsecSyn}

\HChange{We first discuss the synthesis of the estimator gains $L_i$ and the thresholds $\Delta_i$ for the relevant special case where the performance measure is the estimation error, which corresponds to the steady-state Kalman filter objective. We then comment on a synthesis procedure for a general performance measure.}

\subsubsection{Kalman Filter Objective}
In case the performance measure is chosen as $z(k)=e_i(k-1)$, that is $\hat{C}=I$, $\hat{D}_2=0$, and $\hat{D}_3=0$ in \eqref{eq:DefZ}, it follows that \eqref{eq:H2lmi} does not depend on the communication thresholds $\Delta_i$. We therefore propose to design the observer gains $L_i$ in a first step by minimizing $\sqrt{\tr{X}}$ subject to \eqref{eq:H2lmi} and to the conditions ensuring closed-loop stability. For example, if the stability conditions provided by Cor.~\ref{Cor:IES2} are used, we synthesize the observer gains according to
\begin{align}
\begin{split}
&\inf_{X,P,L} \tr{X} \quad \text{subject~to}~P=P\T\\
&\left( \begin{array}{ccc} I &0 &I\\
							 	0 &P &P\hat{A}\\
								I &\hat{A}\T P &P\end{array} \right)>0,
\left( \begin{array}{cc} P &P\hat{B}_{2}\\
\hat{B}_{2}\T P &X\end{array} \right) >0,\\
&\left( \begin{array}{cc} P & P A_\text{cl}(\Pi_i)\\
A_\text{cl}(\Pi_i)\T P & P \end{array} \right)>0,~\forall \Pi_i \in \Pi,
\end{split}\label{eq:KF1}
\end{align}
\HChange{where \eqref{eq:IEScond1} has been rewritten using the Schur complement.}
In the absence of the stability conditions obtained from Cor.~\ref{Cor:IES2}, this optimization would yield a centralized steady-state Kalman filter.  Note that the first condition in \eqref{eq:H2lmi} ensures that $(I-LC)A$ will have all eigenvalues strictly within the unit circle, which implies ISS of the closed-loop system according to Thm.~\ref{Thm:CL}.

\HChange{As a result, the contribution $c^*=\sqrt{\tr{X}}$ to the upper bound given by \eqref{eq:UB} can be calculated, and captures the $\mathcal{H}_2$ gain from the signals $v$ and $w$ to the output $e_i$ in the full communication case.}

\HChange{In a second step, the communication thresholds $\Delta_i$ are synthesized such that an a priori specified worst-case performance $J_\text{max}$ is guaranteed (i.e. $||e_i||_\mathcal{P} < J_\text{max}$). This is achieved by solving}
\begin{align}
&\sup_{Q,\Delta,\gamma} ~\tr{\Delta}
\quad\text{subject~to~$Q=Q\T$~and}\label{eq:ExDesignDelta}\\
&\!\!\left( \begin{array}{cccc} Q &\hat{A}Q &L\Delta &0\\
						Q\hat{A}\T &Q &0 &Q^{\mathrm{T}} \\
						\Delta L\T &0 &I &0\\
						0 &Q &0 &\gamma I\end{array} \right)>0,
\gamma < \frac{1}{N} (J_\text{max}-c^*)^2, \nonumber
\end{align}
\HChange{while keeping the estimator gains $L_i$ fixed. Therefore this two-step procedure has the following interpretation:} In the first step, a lower bound on the achievable cost $||e_i||_\mathcal{P}$ is obtained based on the full communication scenario (i.e. $s_1=0$), while respecting the stability conditions for the inter-agent error. In the second step, the communication thresholds $\Delta_i$ are designed such that the a priori specified worst-case performance $J_\text{max}$ is guaranteed. Hence, the second step can be interpreted as performance versus communication trade-off: increasing $J_\text{max}$ will generally downgrade estimation performance by giving the optimization more flexibility to find larger $\Delta_i$\HChangeN{, which tends to reduce communication.}

\HChange{In general, feasibility of \eqref{eq:KF1} cannot be guaranteed. However, the optimization \eqref{eq:ExDesignDelta} is guaranteed to be feasible provided that $J_\text{max}>c^*$. Details regarding feasibility
and extensions in case \eqref{eq:KF1} is not feasible are discussed in 
\ifArxiv
App.~\ref{App:Feas} and \cite{muehlebach2015lmi}.
\else
\cite[App.~E]{extendedVersion} and \cite{muehlebach2015lmi}.
\fi
}

\subsubsection{General Case}
\HChange{This two-step procedure can also be applied in case of a more general performance objective given by \eqref{eq:DefZ}. The difference is that \eqref{eq:H2lmi} might depend on the communication thresholds $\Delta_i$. As a result, we propose to keep the communication thresholds $\Delta_i$ fixed in the first step, yielding the observer gains $L_i$. In the second step, the observer gains $L_i$ are kept fixed and the communication thresholds are updated by solving an optimization similar to \eqref{eq:ExDesignDelta}. The procedure is then repeated until convergence or satisfactory performance.}

\vspace*{-8pt}\section{Simulation Example}\label{Sec:SimExamples}
\HChange{The presented framework for event-based estimation and control is applied in a simulation example that is based on a simplified model for vehicle platooning. Thereby, the communication versus performance trade-off of the proposed approach is discussed, as well as the scalability with respect to a larger number of agents. Additional simulation studies can be found in 
\ifArxiv
App.~\ref{App:Sim}, \cite{muehlebach2015lmi}, and \cite{muehlebach2015guaranteed}.
\else
\cite[App.~G]{extendedVersion}, \cite{muehlebach2015lmi}, and \cite{muehlebach2015guaranteed}.
\fi}


The problem of vehicle platooning has been studied extensively in the literature, see e.g. \cite{AlamPlatoon}, \cite{1098376}, and references therein. In \cite{Jvanovich}, it is shown that the linear quadratic regulator problem is ill-posed as the number of vehicles tends to infinity. Moreover, \cite{Seiler} shows that string instability occurs for any local linear feedback law, where the input of the $i$th vehicle depends linearly on the relative distance to its two neighbors. This motivates the use of a common network, where the different vehicles can exchange information across the platoon.

Similar to \cite{1098376}, we consider a chain of $M$ vehicles (agents), where each vehicle is modeled as a unit point mass. The aim is to control the velocity and the position of each vehicle relative to its neighbors. \HChange{The following continuous-time model is introduced, c.f. \cite{1098376},}
\begin{equation}
x_i(t):=\!\!\left( \begin{array}{c} p_i(t)\\ r_i(t)-r_{i+1}(t)\end{array}\right), \dot{x}_i(t)=\!\!\left(\begin{array}{c} u_i(t)\\ p_i(t)-p_{i+1}(t)\end{array} \right), 
\end{equation}
$i=1,2,\dots,M-1$, and $x_M(t):=p_M(t)$, $\dot{x}_M(t)=u_M(t)$
with $t\in [0,\infty)$, where $r_i$ and $p_i$ denote the position and velocity of the $i$th vehicle and $u_i$ the normalized force generated by the motor of the $i$th vehicle. The model is discretized with a sampling time of $\unit[20]{ms}$ leading to the model \eqref{eq:LinSys}.\footnote{The same notation is used for continuous and discrete-time signals, e.g. $x(t)$ refers to the continuous-time state trajectory, $x(k)$ to the discrete-time state trajectory.}

Each vehicle measures the distance to the previous vehicle,
except for the first vehicle, which measures its velocity.
The measurements are corrupted by independent, uniformly distributed noise, with $[\unit[-0.1]{m},\unit[0.1]{m}]$ (distance measurements), 
\ifsingleColumn
$[\unitfrac[-0.1]{m}{s},$ $\unitfrac[0.1]{m}{s}]$
\else
$[\unitfrac[-0.1]{m}{s},\unitfrac[0.1]{m}{s}]$
\fi
(velocity measurements). \HChange{Likewise, the inputs $u_i(k)$ are corrupted by independent, uniformly distributed noise 
\ifsingleColumn
$[\unitfrac[-0.01]{m}{s^2},$ $\unitfrac[0.01]{m}{s^2}]$
\else
$[\unitfrac[-0.01]{m}{s^2},\unitfrac[0.01]{m}{s^2}]$
\fi
.} The system is controllable and observable, but neither controllable nor observable for each agent on its own.

A stabilizing feedback controller $F$ is obtained by solving the linear quadratic regulator problem with the identity $I \in \mathbb{R}^{(2M-1) \times (2M-1)}$ and the scaled identity $100 ~I \in \mathbb{R}^{M \times M}$ for weighting the state and input costs.  

\subsubsection{3 Vehicles}
\HChange{We consider first the case of three vehicles ($M=3$). As performance objective, the power of the estimation error, $||e_i||_\mathcal{P}$, is used, and the observer gains $L_i$ and the communication thresholds $\Delta_i$ are designed according to Sec.~\ref{Sec:SubsecSyn}. The optimizations are solved up to a tolerance of $10^{-8}$ using SDPT-3, \cite{Toh99sdpt3}, interfaced through Yalmip, \cite{JohanYALMIP}.}
\HChange{The different stability conditions, that is, the conditions given by Thm.~\ref{Thm:IES1}, Cor.~\ref{Cor:IES2}, and Cor.~\ref{Cor:IES3}, lead in this case to a very similar design of the observer gains. We will therefore focus on the results obtained by Cor.~\ref{Cor:IES2}. However, this does not necessarily need to be the case, as shown in
\ifArxiv
App.~\ref{App:Sim}.
\else
\cite[App.~G]{extendedVersion}.
\fi
For the synthesis of the communication thresholds, $J_\text{max}$ is chosen to be roughly 35 times the power $||e_i||_\mathcal{P}$ corresponding to the full communication case, i.e. $J_\text{max}=0.38$, yielding $\Delta_1=0.107$, $\Delta_2=0.092$, and $\Delta_3=0.106$.}

\HChange{The resulting closed-loop system is studied in simulations, where the first car is initialized with a surplus velocity of $\unitfrac[5]{m}{s}$.} The state estimates of the different agents are initialized with zero. In addition, a communication loss rate of $10\%$ is introduced (independent Bernoulli-distributed). \HChange{The simulation results indicate that the approach is robust also to non-deterministic and potentially unbounded disturbances $d_i$.} Fig.~\ref{Fig:Traj3} shows the evolution of the distances between the three vehicles. \HChange{In steady state, the distance error between the vehicles is kept below $\pm \unit[0.1]{m}$.} The communication rates (smoothed with a moving average filter of length 200) of the different vehicles are depicted in Fig.~\ref{Fig:PerfCar}. The communication rate is normalized such that a rate of $1.0$ corresponds to all agents transmitting their measurements at every time step. \HChange{In steady state, the second vehicle communicates its measurement in around $8\%$ of the time, whereas the first and last vehicle communicate at a rate below $4\%$.}

\HChange{The trade-off between estimation performance and communication is obtained by varying $J_\text{max}$. The corresponding steady-state performance $||e_i||_\mathcal{P}$ and the communication rates of the different designs are evaluated in simulations. Their values were estimated using 20 independent simulations (with different noise realizations) over $\unit[1000]{s}$. The variability among the different noise realizations was found to be negligible and a time horizon of $\unit[1000]{s}$ sufficiently long for transients to be insignificant. The communication versus performance graph, as depicted in Fig.~\ref{Fig:PerfCar} is compared to a centralized discrete-time design with reduced sampling rates.\footnote{\HChange{The centralized design is obtained by re-sampling the discrete-time system \eqref{eq:LinSys} at increasingly lower rates, and then performing a centralized steady-state Kalman filter design based on the performance objective $||e_i||_\mathcal{P}$. The fact that the inputs are also communicated is not accounted for in the corresponding communication rates shown in Fig.~\ref{Fig:PerfCar}.}}
This reveals that a better trade-off is achieved by the event-based design as opposed to the centralized design with reduced periodic sampling rates.}


\begin{figure}
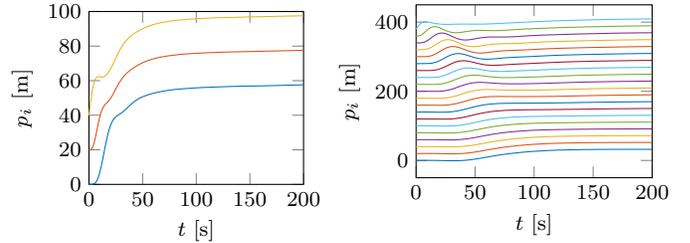

\newlength\figureheight
\newlength\figurewidth
\begin{minipage}[l]{.45\columnwidth}
\ifsingleColumn
\setlength\figureheight{3.6cm}
\setlength\figurewidth{7.0cm}
\center
\else
\setlength\figureheight{2.3 cm}
\setlength\figurewidth{3cm}
\center
\fi
\input{media/Plots3/abspos3.tikz}
\end{minipage}\hfill
\begin{minipage}[r]{.54\columnwidth}
\ifsingleColumn
\setlength\figureheight{3.6 cm}
\setlength\figurewidth{7.0cm}
\center
\else
\setlength\figureheight{2.3 cm}
\setlength\figurewidth{3.3cm}
\center
\fi
\input{media/Plots20/abspos20.tikz}
\end{minipage}
\hspace*{-8pt}
\caption{\HChange{Left: Platoon with three vehicles, where the evolution of the absolute positions of vehicle 1 (yellow), vehicle 2 (red), and vehicle 3 (blue) is shown. Right: Platoon with 20 vehicles. The vehicles are initialized with an inter-vehicle distance of $\unit[20]{m}$.}}
\label{Fig:Traj3}
\end{figure}



\begin{figure}
\begin{minipage}[l]{.5\columnwidth}
\ifsingleColumn
\setlength\figureheight{3.6 cm}
\setlength\figurewidth{7.0cm}
\center
\else
\setlength\figureheight{2.3 cm}
\setlength\figurewidth{3.3cm}
\center
\fi
\input{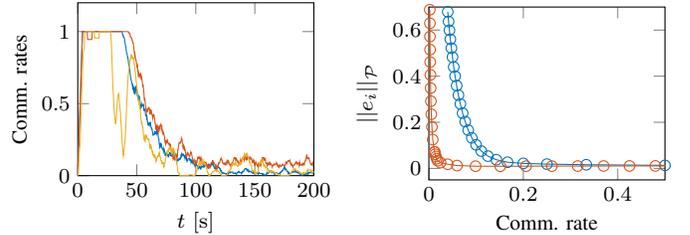}
\end{minipage}
\begin{minipage}[r]{.5\columnwidth}
\ifsingleColumn
\setlength\figureheight{3.6 cm}
\setlength\figurewidth{7.0cm}
\center
\else
\setlength\figureheight{2.3 cm}
\setlength\figurewidth{3.3cm}
\center
\fi
%
%
\definecolor{mycolor1}{rgb}{0.00000,0.44700,0.74100}%
\definecolor{mycolor2}{rgb}{0.85000,0.32500,0.09800}%
\begin{tikzpicture}

\begin{axis}[%
width=0.95092\figurewidth,
height=\figureheight,
at={(0\figurewidth,0\figureheight)},
scale only axis,
xmin=0,
xmax=.5,
xlabel={Comm. rate},
xlabel near ticks,
ymin=-0.05,
ymax=0.7,
ylabel={$||e_i||_\mathcal{P}$},
ylabel near ticks
]
\addplot [color=mycolor1,solid,mark=o,mark options={solid},forget plot]
  table[row sep=crcr]{%
1	0.00988749070813044\\
0.499980000799968	0.0130096057428977\\
0.333346666133355	0.0157018686104381\\
0.249990000399984	0.0184853730195692\\
0.199992000319987	0.0215266540486304\\
0.166673333066677	0.0266594124400599\\
0.142874285028599	0.0375003923084142\\
0.124995000199992	0.0531939494838069\\
0.111115555377785	0.0749769229230476\\
0.0999960001599936	0.102223222269453\\
0.0909163633454662	0.132495972687437\\
0.0833166673333067	0.165797897156844\\
0.0769169233230671	0.202931260601052\\
0.0714371425142994	0.239518563642342\\
0.0666773329066837	0.279773313717899\\
0.062517499300028	0.317951793309529\\
0.0588376464941402	0.358995254334838\\
0.0555577776888925	0.399684649744078\\
0.0526378944842206	0.439812687977266\\
0.0499980000799968	0.482964851205435\\
0.0476380944762209	0.521771529808857\\
0.0454381824727011	0.560090030803738\\
0.0434782608695652	0.599284865630065\\
0.0416783328666853	0.637058507392177\\
0.0399984000639975	0.679333812113199\\
};
\addplot [color=mycolor2,solid,mark=o,mark options={solid},forget plot]
  table[row sep=crcr]{%
0.99998733383998	0.0110050217309146\\
0.973195072197112	0.0110063141718453\\
0.918573923709718	0.0110106314476231\\
0.863938109142301	0.0110129331736696\\
0.809447622095116	0.0110087833954988\\
0.754451155287122	0.0109861512980537\\
0.699491353679186	0.0109549696840082\\
0.64436222551098	0.0109158978384561\\
0.589740410383585	0.0108405451137218\\
0.53489393757583	0.0107452000219472\\
0.47973747716758	0.0106673890894912\\
0.424941002359906	0.010508763302995\\
0.370111862192179	0.0102640163771328\\
0.315202725224324	0.0100667834808358\\
0.260455581776729	0.00979991181747134\\
0.205795101529272	0.00946160521585086\\
0.151259949602016	0.00920941174839582\\
0.0988500459981601	0.00940164802553727\\
0.0596336146554138	0.0121077825418049\\
0.0398457395037532	0.017828572280063\\
0.0246096822793755	0.0249546723272445\\
0.0186352545898164	0.0331594857629533\\
0.0165253389864405	0.0420024976050121\\
0.0149947335439916	0.0514740738777693\\
0.0132541365012066	0.0611690463179635\\
0.0121095156193752	0.0715223180537594\\
0.0121095156193752	0.0715223180537594\\
0.00814434089303095	0.124980362191441\\
0.00647640761036225	0.17985757981222\\
0.00509379624815007	0.23488706841721\\
0.00425982960681573	0.291429694458996\\
0.00369051904590483	0.348110126306379\\
0.00322387104515819	0.405710848826418\\
0.00282122048451395	0.46146869486902\\
0.00251589936402544	0.515047128354278\\
0.00222324440355719	0.571982314913065\\
0.0021552471234484	0.629424153855318\\
0.00189259096302815	0.688472782341245\\
};
\end{axis}
\end{tikzpicture}%
\end{minipage}\vspace*{-6pt}
\caption{\HChange{Left: Communication rates corresponding to the three vehicles in Fig.~\ref{Fig:Traj3}. Right: Performance versus communication plot for the event-based design (red) and the centralized design with reduced sampling rates (blue). The graph focuses on communication rates below 0.5, as the achieved performance $||e_i||_\mathcal{P}$ changes only insignificantly for rates above 0.5.}}
\label{Fig:PerfCar}
\end{figure}

\subsubsection{20 Vehicles}
The design procedure is repeated for the case $M=20$, which results in an optimization including $1973$ variables. For this example, the inter-agent error stability conditions provided by Cor.~\ref{Cor:IES2} would lead to a numerically intractable problem (this would amount to $2^{20}$ LMIs). 

The resulting closed-loop performance is evaluated in simulations, where the leading car is initialized with a surplus velocity of $\unitfrac[5]{m}{s}$, the state estimates of the different vehicles are initialized with zero, and again a packet loss rate of $10\%$ is introduced. The absolute positions of all vehicles 
are shown in Fig.~\ref{Fig:Traj3}. \HChange{In steady state, the distance error remains below $\unit[0.2]{m}$ for all 20 vehicles.} The communication rates are found to be higher for the leading vehicles, see Fig.~\ref{Fig:SSComm}, which can be explained by the fact that the actions of the leading vehicles influence all remaining vehicles.


\begin{figure}
\ifsingleColumn
\setlength\figureheight{4cm}
\setlength\figurewidth{7.7cm}
\center
\else
\setlength\figureheight{2.1cm}
\setlength\figurewidth{6.7cm}
\center
\vspace*{-8pt}
\fi
%
%
\begin{tikzpicture}

\begin{axis}[%
width=0.95092\figurewidth,
height=\figureheight,
at={(0\figurewidth,0\figureheight)},
scale only axis,
log origin=infty,
area legend,
xmin=0,
xmax=21,
xmajorgrids,
xlabel={Vehicle nr.},
xlabel near ticks,
ymode=log,
ymin=0.001,
ymax=1,
yminorticks=true,
ymajorgrids,
yminorgrids,
ylabel={Comm. rate},
ylabel near ticks
]
\addplot[ybar,bar width=0.030429\figurewidth,draw=black,fill=black] plot table[row sep=crcr] {%
1	0.98540129197416\\
2	0.27967440651187\\
3	0.197227055458891\\
4	0.0963390732185356\\
5	0.0328743425131497\\
6	0.023137537249255\\
7	0.0188616227675446\\
8	0.0166856662866743\\
9	0.0145517089658207\\
10	0.0134537309253815\\
11	0.0121337573248535\\
12	0.0129747405051899\\
13	0.0131107377852443\\
14	0.0121687566248675\\
15	0.0119557608847823\\
16	0.0125457490850183\\
17	0.0122587548249035\\
18	0.0112467750644987\\
19	0.0105477890442191\\
20	0.00978480430391392\\
};
\addplot [color=red,only marks,mark=.,mark options={solid},forget plot]
 plot [error bars/.cd, y dir = both, y explicit]
 table[row sep=crcr, y error plus index=2, y error minus index=3]{%
1	0.98540129197416	0.000406801013594798	0.000406801013594798\\
2	0.27967440651187	0.00268331420970574	0.00268331420970574\\
3	0.197227055458891	0.00151495415454567	0.00151495415454567\\
4	0.0963390732185356	0.00114926466593381	0.00114926466593381\\
5	0.0328743425131497	0.0022206133850148	0.0022206133850148\\
6	0.023137537249255	0.0021475520318475	0.0021475520318475\\
7	0.0188616227675446	0.00181332815525502	0.00181332815525502\\
8	0.0166856662866743	0.00197331034406098	0.00197331034406098\\
9	0.0145517089658207	0.00175202119515519	0.00175202119515519\\
10	0.0134537309253815	0.00171400803338692	0.00171400803338692\\
11	0.0121337573248535	0.00147255653063215	0.00147255653063215\\
12	0.0129747405051899	0.00236724624636628	0.00236724624636628\\
13	0.0131107377852443	0.00165542008125911	0.00165542008125911\\
14	0.0121687566248675	0.00157272977799407	0.00157272977799407\\
15	0.0119557608847823	0.00226985467026407	0.00226985467026407\\
16	0.0125457490850183	0.00180217429531008	0.00180217429531008\\
17	0.0122587548249035	0.00175533529954959	0.00175533529954959\\
18	0.0112467750644987	0.00220405043411061	0.00220405043411061\\
19	0.0105477890442191	0.00184883636441292	0.00184883636441292\\
20	0.00978480430391392	0.00135969220445295	0.00135969220445295\\
};
\end{axis}
\end{tikzpicture}%
\vspace*{-11pt}
\caption{\HChange{Communication at steady state for all 20 vehicles. The communication rates are determined by $20$ independent simulations (different noise realizations) of the system over a time horizon of $\unit[1000]{s}$, which was found to be sufficiently long for transients to die out.} The error bars indicate the standard deviation over the different noise realizations.}
\label{Fig:SSComm}
\end{figure}
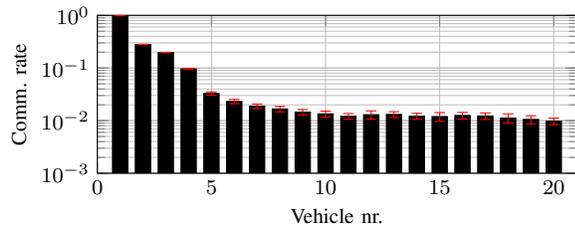
\vspace*{-6pt}\section{Conclusion}\label{Sec:Conclusion}
A synthesis procedure for the estimator gains and the communication thresholds for a distributed event-based control system was presented in this article. In contrast to previous work, communication is reduced by avoiding the exchange of control inputs between agents and (under favorable circumstances) estimator resets. The design guarantees a stable closed-loop system and satisfies a predefined worst-case estimation performance. Relaxations leading to a simplified design
 are particularly suitable for a large number of agents.
The proposed event-based design was evaluated in simulations of a vehicle platoon.
\ifsingleColumn
\else
\vspace{-11pt}
\fi


%

%

%
%

\ifCLASSOPTIONcaptionsoff
  \newpage
\fi



\bibliographystyle{IEEEtran}
\bibliography{literature}

\begin{thebibliography}{10}
\providecommand{\url}[1]{#1}
\csname url@samestyle\endcsname
\providecommand{\newblock}{\relax}
\providecommand{\bibinfo}[2]{#2}
\providecommand{\BIBentrySTDinterwordspacing}{\spaceskip=0pt\relax}
\providecommand{\BIBentryALTinterwordstretchfactor}{4}
\providecommand{\BIBentryALTinterwordspacing}{\spaceskip=\fontdimen2\font plus
\BIBentryALTinterwordstretchfactor\fontdimen3\font minus
  \fontdimen4\font\relax}
\providecommand{\BIBforeignlanguage}[2]{{%
\expandafter\ifx\csname l@#1\endcsname\relax
\typeout{** WARNING: IEEEtran.bst: No hyphenation pattern has been}%
\typeout{** loaded for the language `#1'. Using the pattern for}%
\typeout{** the default language instead.}%
\else
\language=\csname l@#1\endcsname
\fi
#2}}
\providecommand{\BIBdecl}{\relax}
\BIBdecl

\bibitem{Le11}
M.~Lemmon, ``Event-triggered feedback in control, estimation, and
  optimization,'' in \emph{Networked Control Systems}.\hskip 1em plus 0.5em
  minus 0.4em\relax Springer, 2010, pp. 293--358.

\bibitem{HeJoTa12}
W.~Heemels, K.~Johansson, and P.~Tabuada, ``An introduction to event-triggered
  and self-triggered control,'' in \emph{Proc. of the 51st IEEE Conference on
  Decision and Control}, 2012, pp. 3270--3285.

\bibitem{GrHiJuEtAl14}
L.~Gr{\"u}ne, S.~Hirche \emph{et~al.}, ``Event-based control,'' in
  \emph{Control Theory of Digitally Networked Dynamic Systems}.\hskip 1em plus
  0.5em minus 0.4em\relax Springer, 2014, pp. 169--261.

\bibitem{Ca14}
C.~G. Cassandras, ``The event-driven paradigm for control, communication and
  optimization,'' \emph{Journal of Control and Decision}, vol.~1, no.~1, pp.
  3--17, 2014.

\bibitem{MiskowiczEvent}
M.~Miskowicz~(editor), \emph{Event-Based Control and Signal Processing}.\hskip
  1em plus 0.5em minus 0.4em\relax CRC Press, 2016.

\bibitem{EventbasedBroadcast}
G.~S. Seyboth, D.~V. Dimarogonas, and K.~H. Johansson, ``Event-based
  broadcasting for mulit-agent average consensus,'' \emph{Automatica}, vol.~49,
  no.~1, pp. 245--252, 2013.

\bibitem{Tr12}
S.~Trimpe, ``Event-based state estimation with switching static-gain
  observers,'' in \emph{Proc. of the 3rd IFAC Workshop on Distributed
  Estimation and Control in Networked Systems}, 2012, pp. 91--96.

\bibitem{Sebastian-CDC}
------, ``Stability analysis of distributed event-based state estimation,'' in
  \emph{Proc. of the 53rd IEEE Conference on Decision and Control}, 2014, pp.
  2013--2019.

\bibitem{Th05}
J.-P. Thomesse, ``Fieldbus technology in industrial automation,''
  \emph{Proceedings of the IEEE}, vol.~93, no.~6, pp. 1073--1101, 2005.

\bibitem{FeZiMoTh12}
F.~Ferrari, M.~Zimmerling, L.~Mottola, and L.~Thiele, ``Low-power wireless
  bus,'' in \emph{Proc. of the 10th ACM Conference on Embedded Network Sensor
  Systems}, 2012, pp. 1--14.

\bibitem{trimpe2012balancing}
S.~Trimpe and R.~D'Andrea, ``The {B}alancing {C}ube: A dynamic sculpture as
  test bed for distributed estimation and control,'' \emph{IEEE Control
  Systems}, vol.~32, no.~6, pp. 48--75, 2012.

\bibitem{BoydLMI}
S.~Boyd, L.~El~Ghaoui, E.~Feron, and V.~Balakrishnan, \emph{Linear Matrix
  Inequalities in System and Control Theory}.\hskip 1em plus 0.5em minus
  0.4em\relax SIAM, 1994.

\bibitem{muehlebach2015lmi}
M.~Muehlebach and S.~Trimpe, ``{LMI}-based synthesis for distributed
  event-based state estimation,'' in \emph{Proc. of the American Control
  Conference}, 2015, pp. 4060--4067.

\bibitem{muehlebach2015guaranteed}
------, ``Guaranteed $\mathcal{H}_2$ performance in distributed event-based
  state estimation,'' in \emph{Proc. of the International Conference on
  Event-based Control, Communication, and Signal Processing}, 2015.

\bibitem{zhang2015event}
X.-M. Zhang and Q.-L. Han, ``Event-based {H}$_{\infty}$ filtering for
  sampled-data systems,'' \emph{Automatica}, vol.~51, pp. 55--69, 2015.

\bibitem{Zhang2015}
------, ``A decentralized event-triggered dissipative control scheme for
  systems with multiple sensors to sample the system outputs,'' \emph{IEEE
  Transactions on Cybernetics}, vol.~46, no.~12, pp. 2745--2757, 2016.

\bibitem{YaZhZhYa14}
L.~Yan, X.~Zhang, Z.~Zhang, and Y.~Yang, ``Distributed state estimation in
  sensor networks with event-triggered communication,'' \emph{Nonlinear
  Dynamics}, vol.~76, pp. 169--181, 2014.

\bibitem{MiTiFiDiJoRu12}
C.~Fischione, D.~Dimarogonas, F.~Rubio, K.~Johansson, P.~Millan, and U.~Tiberi,
  ``Distributed event-based observers for {LTI} networked systems,'' in
  \emph{Proc. of the Portuguese Conference on Automatic Control}, 2012.

\bibitem{ShShCh16}
D.~Shi, L.~Shi, and T.~Chen, \emph{Event-Based State Estimation}.\hskip 1em
  plus 0.5em minus 0.4em\relax Springer, 2016.

\bibitem{SebastianArxiv}
S.~Trimpe, ``Event-based state estimation: An emulation-based approach,''
  \emph{IET Control Theory \& Applications}, vol.~11, no.~11, pp.
  1684--–1693, 2017.

\bibitem{TrimpeCampi}
S.~Trimpe and M.~C. Campi, ``On the choice of the event trigger in event-based
  estimation,'' \emph{Proc. of the International Conference on Event-based
  Control, Communication, and Signal Processing}, 2015.

\bibitem{SiKeNo14}
J.~Sijs, L.~Kester, and B.~Noack, ``A study on event triggering criteria for
  estimation,'' in \emph{Proc. of the 17th International Conference on
  Information Fusion}, 2014.

\bibitem{AsWi97}
K.~{\AA}str{\"o}m and B.~Wittenmark, \emph{Computer-controlled systems: theory
  and design}.\hskip 1em plus 0.5em minus 0.4em\relax Prentice Hall, 1997.

\bibitem{TrDAn11}
S.~Trimpe and R.~D'Andrea, ``An experimental demonstration of a distributed and
  event-based state estimation algorithm,'' in \emph{Proc. of the 18th IFAC
  World Congress}, 2011, pp. 8811--8818.

\bibitem{WuJiJoSh13}
J.~Wu, Q.-S. Jia, K.~Johansson, and L.~Shi, ``Event-based sensor data
  scheduling: Trade-off between communication rate and estimation quality,''
  \emph{IEEE Transactions on Automatic Control}, vol.~58, no.~4, pp.
  1041--1046, 2013.

\bibitem{jiang2001input}
Z.-P. Jiang and Y.~Wang, ``Input-to-state stability for discrete-time nonlinear
  systems,'' \emph{Automatica}, vol.~37, no.~6, pp. 857--869, 2001.

\bibitem{ChangingSupplyFunctions}
D.~Ne\v{s}i\'{c} and A.~R. Teel, ``Changing supply functions in input to state
  stable systems: The discrete-time case,'' \emph{IEEE Transactions on
  Automatic Control}, vol.~46, no.~6, pp. 960--962, 2001.

\bibitem{Oppenheim}
A.~V. Oppenheim and R.~W. Schafer, \emph{Discrete-time signal
  processing}.\hskip 1em plus 0.5em minus 0.4em\relax Prentice Hall, 1999.

\bibitem{de2002extended}
M.~C. De~Oliveira, J.~C. Geromel, and J.~Bernussou, ``Extended $\mathcal{H}_2$
  and $\mathcal{H}_\infty$ norm characterizations and controller
  parametrizations for discrete-time systems,'' \emph{International Journal of
  Control}, vol.~75, no.~9, pp. 666--679, 2002.

\bibitem{AlamPlatoon}
A.~Alam, J.~M{\aa}rtensson, and K.~H. Johansson, ``Experimental evaluation of
  decentralized cooperative cruise control for heavy-duty vehicle platooning,''
  \emph{Control Engineering Practice}, vol.~38, pp. 11--25, 2015.

\bibitem{1098376}
W.~Levine and M.~Athans, ``On the optimal error regulation of a string of
  moving vehicles,'' \emph{IEEE Transactions on Automatic Control}, vol.~11,
  no.~3, pp. 355--361, 1966.

\bibitem{Jvanovich}
M.~R. Jovanovi\`c and B.~Bamieh, ``On the ill-posedness of certain vehicular
  platoon control problems,'' \emph{IEEE Transactions on Automatic Control},
  vol.~50, no.~9, pp. 1307--1321, 2005.

\bibitem{Seiler}
P.~Seiler, A.~Pant, and K.~Hedrick, ``Disturbance propagation in vehicle
  strings,'' \emph{IEEE Transactions on Automatic Control}, vol.~49, no.~10,
  pp. 1835--1842, 2004.

\bibitem{Toh99sdpt3}
K.~C. Toh, M.~J. Todd, and R.~H. T\"ut\"unc\"u, ``{SDPT3} -- a {MATLAB}
  software package for semidefinite programming, version 1.3,''
  \emph{Optimization Methods and Software}, vol.~11, pp. 545--581, 1999.

\bibitem{JohanYALMIP}
J.~L\"ofberg, ``{YALMIP}: a toolbox for modeling and optimization in
  {MATLAB},'' in \emph{Proc. of the IEEE International Symposium on Computer
  Aided Control Systems Design}, 2004, pp. 284--289.

\bibitem{zhou1996robust}
K.~Zhou, J.~Doyle, K.~Glover \emph{et~al.}, \emph{Robust and optimal
  control}.\hskip 1em plus 0.5em minus 0.4em\relax Prentice Hall New Jersey,
  1996.

\end{thebibliography}

\ifArxiv
\newpage

\appendices

\section{Proof of Lemma~\ref{Lem:ES}}\label{App:LeES}
\begin{proof} \emph{Sufficiency:} Let the matrix $(I-LC)A$ have eigenvalues with magnitude strictly less than one. According to \eqref{eq:AgentErrorDyn} it is enough to show that $\xi$ is bounded, since $\epsilon_{ji}$, $w_i$, $\xi$, and $d_i$ are bounded by assumption. From the triangle inequality, the submultiplicativity of the two-norm, and the communication protocol \eqref{eq:I}, it follows that $|\xi(k)|$ is bounded by
\begin{align*}
\sum_{i \in \comp{I}(k)}\!\!\!\!|L_i \Delta_i| |\Delta_i^{-1} (y_i(k)-C_i \hat{x}_i(k|k\!-\!1))| \! \leq \!\sum_{i=1}^N |L_i \Delta_i|.
\end{align*} 
\emph{Necessity:} The argument is based on contradiction. Thus we assume the system to be ISS and the matrix $(I-LC)A$ to have at least one eigenvalue of magnitude greater or equal than one. Choosing disturbances $d_i$ parallel to an eigenvector of $(I-LC)A$ with corresponding eigenvalue having magnitude greater or equal than one contradicts the assumption that the agent error is ISS.
\end{proof}

\section{Proof of Thm.~\ref{Thm:Performance}}\label{App:ThmVI1}
\begin{proof} 
The dynamics of the performance objective $z$, as defined in \eqref{eq:DefZ}, can be written as
\begin{align}
\begin{split}
e_i(k)&=\hat{A} e_i(k-1) + L \Delta s_1(k) + \hat{B}_{2} s_{2}(k) \\
z(k)&=\hat{C} e_i(k-1) + \hat{D}_{2} s_{2}(k),
\end{split}
\end{align}
where
\begin{align*}
s_{2}(k)&:=\begin{bmatrix} W^{-\frac{1}{2}} w(k)\\ V^{-\frac{1}{2}} v(k-1)\end{bmatrix}.
\end{align*}
The communication protocol guarantees that $|s_{1i}(k)|$ is strictly less than one and therefore $|s_1(k)| < \sqrt{N}$.

Let the impulse response from $s_1$ to $z$ be denoted by $g_1$ and the impulse response from $s_{2}$ to $z$ by $g_{2}$. Both are well defined, since the matrix $\hat{A}$ has eigenvalues strictly within the unit circle, which is implied by the first matrix inequality in \eqref{eq:H2lmi}. Using the fact that $||\cdot||_\mathcal{P}$ is a semi-norm yields
\begin{equation}
||z||_\mathcal{P} \leq ||g_1 * s_1 ||_\mathcal{P} + ||g_{2} * s_{2}||_\mathcal{P}, \label{eq:BoundZ}
\end{equation}
where $*$ denotes the convolution operator. The first term can be upper bounded by, \cite[p. 107]{zhou1996robust}\footnote{A continuous-time derivation is presented in \cite{zhou1996robust}. The discrete-time case used herein is analogous.}
\begin{equation}
||g_1 * s_1 ||_\mathcal{P} \leq ||G_1 ||_\infty ||s_1||_\mathcal{P} \leq ||G_1||_\infty \sqrt{N}, \label{eq:G1}
\end{equation}
whereas the second term yields
$||g_{2} * s_{2}||_\mathcal{P}=||G_{2}||_2$, 
by the statistical properties of $s_{2}$, \cite[p. 108]{zhou1996robust}. Note that $G_1$ and $G_{2}$ represent the Z-transforms of $g_1$, respectively $g_{2}$, $||G_1||_\infty$ the $\mathcal{H}_\infty$ norm of $G_1$, and $||G_{2}||_2$ the $\mathcal{H}_2$ norm of $G_{2}$, see e.g. \cite[pp. 97-100]{zhou1996robust}. Thus, combining \eqref{eq:BoundZ} and \eqref{eq:G1} yields
\begin{equation}
||z||_\mathcal{P} \leq ||G_1||_\infty \sqrt{N} + ||G_{2}||_2. \label{eq:prooftmp}
\end{equation}
According to \cite[Lemma 2]{de2002extended}, it holds that $||G_1||_\infty < \sqrt{\gamma}$, where $\gamma \in \mathbb{R}$ satisfies \eqref{eq:Hinflmi}, and according to \cite[Theorem A.2 (Appendix)]{muehlebach2015guaranteed}, $||G_{2}||_2 < \sqrt{\tr{X}}$ holds, where $X=X\T$ satisfies \eqref{eq:H2lmi}.\hfill
\end{proof}
\section{Continuous Local Measurement Update}\label{Subsec:LocMeasUp}
According to \eqref{eq:eventTrigger}, \eqref{eq:I}, 
the measurement $y_i(k)$ is used in the estimator update \eqref{eq:EBSE2} only
if the condition $|\Delta_i^{-1} (y_i(k)-C_i \hat{x}_i(k|k-1))| \geq 1$ is satisfied. However, each agent could  include its local measurements $y_i$ in the update \eqref{eq:EBSE2} continuously (irrespective of the event trigger) without requiring additional communication.
The implications of this alternative scheme regarding closed-loop stability are analyzed next. 

For each agent $i$, let the indicator function $\chi_{i\in \comp{I}}(k)$ be defined as $\chi_{i\in \comp{I}}(k)=1$ if $i\in \comp{I}(k)$ and $0$ otherwise, for $k\in \mathbb{N}$.
In case each agent continuously updates its state estimate with local measurements, the estimation update \eqref{eq:EBSE2} is replaced by
\begin{align}
\hat{x}_i(k)&=\hat{x}_i(k|k-1)+ \sum_{j \in I(k)} L_j (y_j(k)-C_j \hat{x}_i(k|k-1)) \nonumber\\
&+ \underbrace{\chi_{i\in \comp{I}}(k) L_i (y_i(k)-C_i \hat{x}_i(k|k-1)) + d_i(k)}_{:=\bar{d}_i(k)},
\end{align}
where the additional term can be regarded as a disturbance and forms, together with $d_i(k)$, the disturbance $\bar{d}_i(k)$. In fact, $|\bar{d}_i|$ is bounded by
\ifsingleColumn
\begin{equation}
|d_i(k)|+ \chi_{i\in \comp{I}}(k) |L_i| |y_i(k)-C_i \hat{x}_i(k|k-1)| 
< |d_i(k)| + |L_i| \sigma_\text{max} (\Delta_i),
\end{equation}
\else
\begin{align}
\begin{split}
|d_i(k)|+ \chi_{i\in \comp{I}}(k) |L_i|& |y_i(k)-C_i \hat{x}_i(k|k-1)| \\
< &|d_i(k)| + |L_i| \sigma_\text{max} (\Delta_i),
\end{split}
\end{align}
\fi
since $\chi_{i\in \comp{I}}(k)=1$ implies
\begin{equation}
|y_i(k)-C_i \hat{x}_i(k|k-1)| < \sigma_\text{max}(\Delta_i). 
\end{equation}
Hence, the conditions ensuring ISS of the closed-loop system established previously remain valid even in case each agent continuously updates his state estimate with local measurements. While causing no additional communication, such a scheme potentially improves the estimation performance since each agent exploits all locally available measurements.

\section{Modeling Packet Drops}\label{App:ModPacketDrops}
If the communication from agent $m$ to agent $i$ fails at time $k$, the disturbance $d_i(k)$ takes the value
\begin{equation}
d_i(k)= - L_m (y_m(k)-C_m \hat{x}_i(k|k-1)). \label{eq:di1}
\end{equation}
As shown below, \eqref{eq:di1} is a function of the agent errors $e_i$, the process noise $v$, and the measurement noise $w_m$. Hence, if $d_i$ is used to model packet drops, it is implicitly dependent on the agent error $e_i$, and boundedness of $d_i$ cannot be guaranteed a priori. However, we will argue that the $d_i$'s are indeed bounded if packet drops are sufficiently rare, and the conditions given by Thm.~\ref{Thm:IES1}, Cor.~\ref{Cor:IES2}, or Cor.~\ref{Cor:IES3} are fulfilled. We provide a qualitative argument, which can be turned into a quantitative statement about the allowed frequency of packet drops so as to still
 guarantee boundedness of the disturbances $d_i$. Although these statements tend to be conservative, the simulation examples presented in Sec.~\ref{Sec:SimExamples} indicate that relatively frequent packet drops can be tolerated (e.g.\ packet loss probability of $10\%$).

We assume $d_i(1)$ arbitrary and $d_i(k)=0$ for all agents $i$ and for all $2\leq k \leq k_0$, where $k_0$ is a positive integer, describing the earliest time instant at which the next packet drop can occur. We therefore model the packet drops as being sufficiently rare, that is, the number of time instants between two consecutive packet drops is greater or equal than $k_0$.
We assume further that the conditions of Thm.~\ref{Thm:IES1} are fulfilled (the argument is analogous in case the conditions of Cor.~\ref{Cor:IES2} or Cor.~\ref{Cor:IES3} are satisfied). From \eqref{eq:ejidecay} it follows that the inter-agent error decays exponentially due to the fact that $d_i(k)=0$ for all $2\leq k \leq k_0$.
The agent-error can be regarded as a linear time-invariant system with system matrix $(I-LC)A$, which is Schur stable. Thus, an exponentially decaying input will lead to an exponentially decaying output. As a consequence, the agent-error $|e_i(k)|$ can be bounded by
\begin{equation}
a_1^k b_1 \sum_{j=1}^N |d_j(1)| + b_2,
\end{equation}
where $a_1 < 1$ is the decay rate and  $b_2$ is a constant depending on the bounds for $\xi$, $v$, $w$, and $|e_i(0)|$.

Provided that the communication from agent $m$ to agent $i$ fails at time $k$, the measurement equation in \eqref{eq:LinSys} can be used to rewrite \eqref{eq:di1} as
\begin{equation}
d_i(k)=-L_m C_m (x(k) - \hat{x}_i(k|k-1)) - L_m w_m(k),
\end{equation}
which leads, according to \eqref{eq:EBSE1}, \eqref{eq:EstFB}, and \eqref{eq:CLx}, to
\ifsingleColumn
\begin{equation}
d_i(k)=-L_m C_m [(A+BF) e_i(k-1) - \sum_{j=1}^{N} B_j F_j e_j(k-1) + v(k-1) ] - L_m w_m(k).
\end{equation}
\else
\begin{align}
\begin{split}
d_i(k)&=-L_m C_m [(A+BF) e_i(k-1) \\
&- \sum_{j=1}^{N} B_j F_j e_j(k-1) + v(k-1) ] - L_m w_m(k).
\end{split}
\end{align}
\fi
Given that packet drops happen at times $m k_0+1$, $m\in \mathbb{N}$ (or less frequent), we bound $|d_i(mk_0+1)|$ for all agents $i$ using a worst case upper bound over all possible communication failures; that is,
\begin{equation}
|d_i(m k_0+1)| \leq a_1^{k_0} b_3 \sum_{j=1}^{N} |d_j((m-1) k_0+1)| + b_4,
\end{equation}
where $b_3 >0$ and $b_4>0$ are constants. For large enough $k_0$, it follows that $a_1^{k_0} b_3 < 1/N$ and therefore
\begin{equation}
\sum_{i=1}^N |d_i(m k_0+1)| < \sum_{i=1}^N |d_i((m-1) k_0 +1)| + N b_4,
\end{equation}
for all $m\in \mathbb{N}$. Thus, if packet drops are sufficiently rare, the assumption that the disturbances $d_i$ are bounded is indeed valid.

\section{Feasibility}\label{App:Feas}
The stability conditions given by Thm.~\ref{Thm:IES1}, Cor.~\ref{Cor:IES2}, and Cor.~\ref{Cor:IES3} might be too restrictive, resulting in an infeasible synthesis problem (in Step 1). In this case, the inter-agent error is not guaranteed to be ISS. In \cite{Sebastian-CDC}, a reset strategy was introduced to periodically reset the inter-agent error using additional communication. In the following, an extension to this approach is provided ensuring input-to-state stability of the inter-agent error, even in case the corresponding LMI conditions are infeasible. We will use the conditions in Thm.~\ref{Thm:IES1} as starting point. The procedure is analogous if the conditions provided by Cor.~\ref{Cor:IES2}, and Cor.~\ref{Cor:IES3} are used to guarantee inter-agent error stability. 

In a first step, the conditions given by \eqref{eq:IESCond1} are relaxed to
\begin{equation}
A_\text{cl}\T(\Pi_i) P_k A_\text{cl}(\Pi_i) - P_l < \bar{\lambda} I, \label{eq:IESCondRelaxed}
\end{equation}
for all $\Pi_i \in \Pi$ with $k=1$ if $\emptyset \not\in \Pi_i$, $k=2$ if $\Pi_i=\{\emptyset \}$ and $l=1,2$. Note that $\bar{\lambda}\geq 0$ is either fixed, or can be included in the optimization problem as decision variable, see \cite{muehlebach2015lmi}.

From the proof of Thm.~\ref{Thm:CL}, it follows that the function $V$ in \eqref{eq:LyapFun} can be bounded by (c.f. \eqref{eq:evalV})
\begin{equation}
V(k)\leq \left(\frac{\bar{\lambda} + \alpha}{\munderbar\sigma}+1\right) V(k-1) + \left( \frac{\bar{\gamma}^2}{\alpha} + \bar{\delta} \right) D^2, \label{eq:FeasTemp}
\end{equation}
where $D$ is an upper bound to the disturbances $d_{ji}(k)$, i.e. $|d_{ji}(k)| \leq D$ for all $k$, $\bar \delta:= \max_{m \in \{1,2\}} |P_m|$, and $\bar \gamma:= \max_{\Pi_i \in \Pi, m\in\{1,2\}} |P_m A_{\text{cl}}(\Pi_i)|$. 
Therefore, an estimate $\hat{V}(k)$ with $\hat{V}(k) \geq V(k)$ is given by
\begin{align}
\hat{V}(k) = \left(\frac{\bar{\lambda} + \alpha}{\munderbar\sigma}+1\right) \hat{V}(k-1) + \left( \frac{\bar{\gamma}^2}{\alpha} + \bar{\delta} \right) D^2,
\end{align}
for $k \in \mathbb{N}$, $\hat{V}(0)=0$ (provided that all agents are initialized with the same state estimate). Note that in order to tighten the bound, the right hand side of \eqref{eq:FeasTemp} can be minimized with respect to $\alpha >0$, as done in \cite{muehlebach2015lmi}.

As soon as $\hat{V}$ exceeds the predefined threshold $V_\text{max}$, i.e. $\hat{V}(k)\geq V_\text{max}$, a communication is triggered and the different agents' state estimates are set to a common value, which resets the inter-agent errors $\epsilon_{ji}(k)=0$ and implies $\hat{V}(k)=0$.
There are many different reset strategies that can be used, such as a majority vote, the mean, etc.
Such resets bound the inter-agent error since $V_\text{max} \geq V(k) \geq \munderbar\sigma |\epsilon_{ji}(k)|^2$ for all $k$. By the strict feedforward structure of the closed-loop dynamics, this implies ISS of the state $x$ and the agents' estimation errors $e_i$, $i=1,2,\dots,N$. 
 The time instants $k_{\text{reset}_i}$, where $\hat{V}(k_{\text{reset}_i})$ exceeds $V_\text{max}$ for $i=1,2,\dots,N$
 can be precalculated, since the evolution of $\hat{V}(k)$ is not explicitly dependent on time. This amounts to periodic resets and extends the procedure presented in \cite{Sebastian-CDC} by providing a method for choosing the reset period.

The synthesis of the communication thresholds $\Delta_i$ in Step 2 is guaranteed to be feasible. This is because the full communication scenario can be recovered by making the thresholds $\Delta_i$ arbitrarily small; that is, 
$\gamma \rightarrow 0$ in Thm.~\ref{Thm:Performance} (we refer to \cite{muehlebach2015guaranteed} for further details).

\section{Communication of the Inputs}\label{App:ComIn}
In case of an unstable open-loop system, it is essential for guaranteeing inter-agent stability that each agent reconstructs the input $u$ based on its current state estimate $\hat{x}_i$, as opposed to the case where all agents have access to the true input $u$ (proposed in \cite{Sebastian-CDC,Tr12}).

The mechanism leading to a destabilization in case the inputs are communicated can be illustrated by a simple two-agent system having an unstable mode, which is only controllable by agent 1, and only observable by agent 2. Roughly speaking, in case the agents cannot access the true inputs, the inter-agent error tends to decay (e.g. in case there is no communication according to the stable closed-loop dynamics $A+BF$) resulting in communication by agent 2 if the predicted and actual measurements are too far apart, thereby stabilizing the system. In case the agents have access to the true inputs, agent 2 might observe the unstable mode perfectly (but cannot control it), and might thus never share a measurement with agent 1 (who cannot observe the unstable mode at all, but would be able to control it). 

Specifically, this mechanism can be illustrated on the system with matrices
\begin{align}
A&=\left( \begin{array}{cc} 0.5 & 0 \\ 0 & 2\end{array}\right), ~B=\left(\begin{array}{cc} 0 & 1 \\ 1 & 0 \end{array} \right), ~C=\left( \begin{array}{cc} 1 & 0\\ 0& 1\end{array} \right), \\
F&=\left(\begin{array}{cc} 0 & -2 \\ 0.1 & 0\end{array}\right), ~L=\left( \begin{array}{cc} 1 & 0 \\ 0 & 1 \end{array} \right),
\end{align}
where the first agent measures the first component of $y$ and controls the first component of $u$, and the second agent measures the second component of $y$ and controls the second component of $u$. Clearly, the matrices
\begin{align}
A+BF=\left( \begin{array}{cc} 0.6 & 0 \\ 0 & 0 \end{array} \right), \quad (I-LC)A=0,
\end{align}
are stable. The initial condition are chosen as
\begin{align}
\hat{x}_1(0)=\left( \begin{array}{c} 0 \\ 1\end{array} \right), \quad \hat{x}_2(0)=\left( \begin{array}{c} 0 \\ 2\end{array} \right), \quad x(0)=\left( \begin{array}{c} 0 \\ 2\end{array} \right),
\end{align}
and for simplicity, it is assumed that there is neither process noise nor measurement noise, and that the communication thresholds $\Delta_1$ and $\Delta_2$ are set to 1.

In case the input $u$ is communicated, the following sequences of inputs, states, and estimates is obtained
\begin{align*}
\text{Step 1:}\quad &u(0)=\VecT[-2]{0}, ~x(1)=\VecT[0]{2}, ~y(1)=\VecT[0]{2},\\
&\hat{x}_1(1|0)=\VecT[0]{0}, ~\hat{x}_2(1|0)=\VecT[0]{2} \\ &\xrightarrow{\text{no comm.}} \hat{x}_1(1)=\VecT[0]{0}, ~\hat{x}_2(1)=\VecT[0]{2} \\
\text{Step 2:}\quad &u(1)=\VecT[0]{0}, ~x(2)=\VecT[0]{4}, ~y(2)=\VecT[0]{4}, \\
&\hat{x}_1(2|1)=\VecT[0]{0}, ~\hat{x}_2(2|1)=\VecT[0]{4} \\
&\xrightarrow{\text{no comm.}} \hat{x}_1(2)=\VecT[0]{0}, ~\hat{x}_2(2)=\VecT[0]{4},
\end{align*}
leading to $u(n)=\hat{x}_1(n)=0$ and
\begin{equation}
x(n)=y(n)=\hat{x}_1(n)=\hat{x}_2(n)=\VecT[0]{2^n},
\end{equation}
for all $n>0$. Agent 2, which can observe the unstable mode $x_2$, tracks the state perfectly, and as a result, will never communicate its local measurements $y_2$. In contrast, agent 1, which could control the unstable mode, obtains no information about $x_2$. Thus, in the above example, the state estimate $\hat{x}_1$ will stay at zero for all times, whereas $\hat{x}_2$ tracks $x$ perfectly. Overall an unstable closed-loop system is obtained, unless a periodic estimator reset (as proposed in \cite{Sebastian-CDC}) is introduced. Such a reset strategy will periodically set the agents' state estimates to a common average, thereby providing agent 1 with information about $x_2$, resulting in a stabilization of the closed-loop system, as shown in \cite{Sebastian-CDC}.

In case the input is not communicated, the following evolution of the closed-loop system is obtained
\begin{align*}
\text{Step 1:}\quad&u(0)=\VecT[-2]{0}, ~x(1)=\VecT[0]{2}, ~y(1)=\VecT[0]{2}\\
&\hat{x}_1(1|0)=\VecT[0]{0}, ~\hat{x}_2(1|0)=\VecT[0]{0} \\
&\xrightarrow{\text{agent 2 comm}} \hat{x}_1(1)=\VecT[0]{2}, ~\hat{x}_2(1)=\VecT[0]{2}\\
\text{Step 2:}\quad&u(1)=\VecT[0]{-4}, ~x(2)=\VecT[0]{0}, ~y(2)=\VecT[0]{0}\\
&\hat{x}_1(2|1)=\VecT[0]{0}, ~\hat{x}_2(2|1)=\VecT[0]{0} \\
&\xrightarrow{\text{no comm.}} \hat{x}_1(2)=\VecT[0]{0}, ~\hat{x}_2(2)=\VecT[0]{0},
\end{align*}
leading to $u(n)=x(n)=\hat{x}_1(n)=\hat{x}_2(n)=0$ for all $n>1$. In that case, both agents track the state perfectly, because agent 2 communicates its measurement $y_2(1)$ and thus shares its information about the unstable mode with agent 1 who is able to drive the system to 0. Thus, by not sharing the inputs, a stable closed-loop system is obtained. The conditions from Cor.~V.2 are clearly fulfilled, as the Lyapunov matrix $P$ can, for example, be chosen to be the identity. Thus, according to Thm.~V.5 the closed-loop system is guaranteed to be stable.

\begin{figure}
\begin{minipage}[l]{.2\columnwidth}
\graphicspath{{media/}}
\centering
\ifsingleColumn
    \def\svgwidth{0.7\columnwidth} 
\else
    \def\svgwidth{0.9\columnwidth} 
\fi
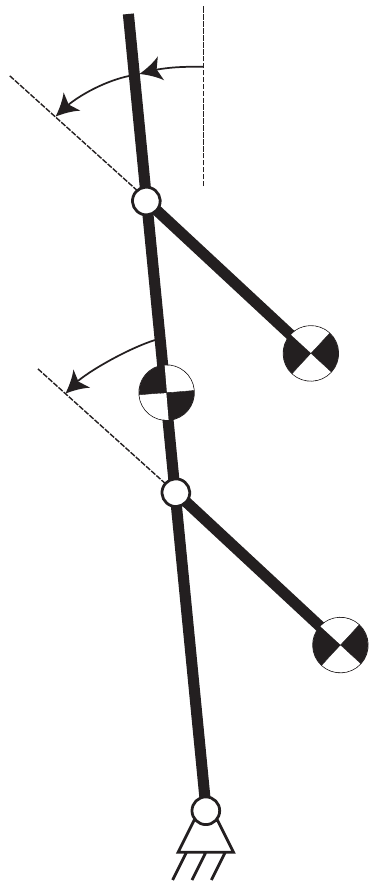
\end{minipage}\hspace{.5cm}
\begin{minipage}[r]{.8\columnwidth}
\ifsingleColumn
\setlength\figureheight{5.2cm}
\setlength\figurewidth{7.4cm}
\center
\else
\setlength\figureheight{3.2cm}
\setlength\figurewidth{5.4cm}
\fi
%
%
\definecolor{mycolor1}{rgb}{0.00000,0.44700,0.74100}%
\definecolor{mycolor2}{rgb}{0.85000,0.32500,0.09800}%
\definecolor{mycolor3}{rgb}{0.92900,0.69400,0.12500}%
\definecolor{mycolor4}{rgb}{1,0.6,0.6}%
\begin{tikzpicture}

\begin{axis}[%
width=0.95092\figurewidth,
height=\figureheight,
at={(0\figurewidth,0\figureheight)},
scale only axis,
xmin=0.1,
xmax=1.0,
xlabel={Comm. rate},
xlabel near ticks,
ymin=0,
ymax=0.26,
ylabel={$||e_i||_{\mathcal{P}}$},
ylabel near ticks
]
\addplot [color=mycolor1,solid,mark=o,mark options={solid},forget plot]
  table[row sep=crcr]{%
0.98552	0.0147020303593256\\
0.890838333333334	0.0158224439935142\\
0.760108333333333	0.0185523933408206\\
0.607273333333333	0.0239553979784145\\
0.46084	0.0322738308239057\\
0.404076666666667	0.0448392699139189\\
0.356763333333333	0.0618948636666186\\
0.314248333333333	0.082521911939148\\
0.306985	0.0986151246538441\\
0.298103333333333	0.114318483227784\\
0.297078333333333	0.130019442571802\\
0.29462	0.145210940833157\\
0.294245	0.160166301476896\\
0.297553333333333	0.176325904085348\\
0.298121666666667	0.192596236635813\\
0.29512	0.207006449257957\\
0.296341666666667	0.222826726950939\\
};
\addplot [color=mycolor2,solid,mark=o,mark options={solid},forget plot]
  table[row sep=crcr]{%
0.998518333333334	0.0195536576644157\\
0.97263	0.0197069129700827\\
0.919941666666667	0.0201834270159706\\
0.864843333333334	0.0211039046289525\\
0.804026666666667	0.0228178000899266\\
0.738181666666667	0.0255593244425293\\
0.666468333333333	0.0295236244287998\\
0.591216666666667	0.0344316812848818\\
0.521756666666667	0.0387059876185621\\
0.456951666666667	0.0426317415013527\\
0.394936666666667	0.0465493656511884\\
0.336755	0.0505526695028366\\
0.284351666666667	0.0545114040022735\\
0.21967	0.0594907349705514\\
0.161515	0.0648204121456059\\
};
\addplot [color=mycolor3,solid,mark=o,mark options={solid},forget plot]
  table[row sep=crcr]{%
1.00006666666667	0.0138261968609451\\
0.5	0.072214987910158\\
0.333333333333333	0.111633797296033\\
0.25	0.152219438404884\\
0.2	0.197920003792878\\
0.166666666666667	0.250677616272214\\
};
\end{axis}
\end{tikzpicture}%
\end{minipage}
\caption{Inverted pendulum balanced by two independently controlled arms (left), 
and resulting performance versus communication plots for different event-based estimator designs (right). Blue: event-based design with the stability conditions of Cor.~\ref{Cor:IES2}; Red: event-based design with the less conservative conditions of Thm.~\ref{Thm:IES1}; Yellow: centralized design with reduced sampling rates. }
\label{Fig:PerfPend}
\end{figure}


\section{Inverted Pendulum System}\label{App:Sim}
The example is taken from \cite{Sebastian-CDC}, where it was proposed as an abstraction of the Balancing Cube \cite{trimpe2012balancing}, which was the experimental test bed for the distributed and event-based methods in \cite{Tr12} and \cite{TrDAn11}.
\HChange{The pendulum system is parametrized by the inclination angle $\theta$, the angle $\varphi_1$ of the lower arm (called Agent 1), and the angle $\varphi_2$ of the upper arm (Agent 2), see Fig.~\ref{Fig:PerfPend}.} 
A state-space model \eqref{eq:LinSys} is obtained through discretization of the continuous dynamics with a sampling time of \unit[10]{ms}. The state is given by $x^{\mathrm{T}}=(\theta,\dot{\theta},\varphi_1,\dot{\varphi}_1,\varphi_2,\dot{\varphi}_2)$, and the inputs are the desired angular rates for the arms, $u=(\dot{\varphi}_{1\text{des}}, \dot{\varphi}_{2\text{des}})$. We refer to \cite{muehlebach2015lmi} for details of the modeling and the numerical values of the state-space matrices.

\HChange{Agent 1 measures $\varphi_1 + w_{\varphi_1}$, $\dot{\varphi}_1+w_{\dot{\varphi}_1}$, and $\dot{\theta}+w_{\dot{\theta}}$; and controls $u_1=\dot{\varphi}_{1\text{des}}+v_{u_1}$. Agent 2 measures $\varphi_2 + n_{\varphi_2}$ and $\dot{\varphi}_2+v_{\dot{\varphi}_2}$; and controls $u_2=\dot{\varphi}_{2\text{des}}+v_{u_2}$.} The signals $v_{\varphi_1}$, $v_{\varphi_2}$, $v_{\dot{\varphi}_1}$, $v_{\dot{\varphi}_2}$, $v_{\dot{\theta}}$, $w_{u_1}$, and $w_{u_2}$ are assumed to be independent, uniformly distributed with zero mean and variances $\sigma_{\varphi_i}^2=(\unit[0.05]{^\circ})^2$, $\sigma_{\dot{\varphi}_i}^2=(\unitfrac[0.1]{^\circ}{s})^2$, $\sigma_{\dot{\theta}}^2=(\unitfrac[0.24]{^\circ}{s})^2$, $\sigma_{u_i}^2=(\unitfrac[1.73]{^\circ}{s})^2$, $i=1,2$. Note that both measurement noise and input noise are introduced. A packet loss probability of $10\%$ is assumed (independent Bernoulli-distributed). \HChange{The simulation results indicate that the approach is robust also to non-deterministic and potentially unbounded disturbances $d_i$.}

The system is controllable and observable, but neither controllable nor observable for each agent on its own. In order to stabilize the upright equilibrium, communication between the agents is indispensable.

A stabilizing state feedback controller $F$ is obtained via a linear quadratic regulator approach, whose values can be found in \cite{muehlebach2015lmi}.

As performance measure, the power of the agent-error $e_i$ is used. Observer gains and communication thresholds are synthesized according to Sec.~\ref{Sec:SubsecSyn}. The optimizations are solved up to a tolerance of $10^{-8}$ using SDPT-3, \cite{Toh99sdpt3}, interfaced through Yalmip, \cite{JohanYALMIP}.



For the disturbance rejection properties of an event-based design based on Cor.~\ref{Cor:IES2}, and a design primarily aimed at reducing communication, we refer to \cite{muehlebach2015lmi}, respectively \cite{muehlebach2015guaranteed}. Herein, we focus on the trade-off between estimation performance and communication, which is obtained by varying $J_\text{max}$. The steady-state performance $||e_i||_\mathcal{P}$ and the communication rates of the different designs (obtained by successively increasing $J_\text{max}$) are evaluated in simulations. Their values were estimated using $20$ independent simulations (with different noise realizations) over $\unit[150]{s}$. The variability among the different noise realizations was found to be negligible and a time horizon of $\unit[150]{s}$ sufficiently long for transients to be insignificant. The communication versus performance graphs, resulting from the different designs, i.e. stability conditions according to Thm.~\ref{Thm:IES1} and Cor.~\ref{Cor:IES2}, are depicted in Fig.~\ref{Fig:PerfPend} (right), which also includes the graph for a centralized discrete-time design with reduced sampling rates for comparison. \HChange{As in Sec.~\ref{Sec:SimExamples},  the centralized design is obtained by re-sampling the discrete-time system \eqref{eq:LinSys} at increasingly lower rates, and then performing a centralized steady-state Kalman filter design based on the performance objective $||e_i||_\mathcal{P}$. The fact that the inputs are also communicated is not accounted for in the corresponding communication rates shown in Fig.~\ref{Fig:PerfPend}.} The communication rate is normalized such that a rate of 1.0 corresponds to both agents transmitting their measurements at every time step.

The comparison in Fig.~\ref{Fig:PerfPend} reveals that for communication rates above $40\%$ the design based on the stability conditions given by Cor.~\ref{Cor:IES2} is superior. In case the communication is further reduced, but kept above $15\%$, the design based on the stability conditions given by Thm.~\ref{Thm:IES1} achieves a lower cost. \HChange{If $J_{\text{max}}$ is increased further, the communication rate is found to increase again, which is possibly due to nonlinear effects.}
\HChange{Compared to the centralized design with reduced periodic sampling rates a better trade-off is achieved by the event-based design.}



\else
\fi

\end{document}